\spnewtheorem{observation}{Observation}{\normalfont\bfseries}{\itshape}
\spnewtheorem{Claim}{Claim}{\normalfont\bfseries}{\itshape}
\newcommand{\Oh}{\mathcal{O}}
\newcommand{\app}{$\spadesuit$}
\tikzset{
  circ/.style = {circle,draw,fill,inner sep=1.3pt},
  circr/.style = {circle,draw=red,fill=red,inner sep=1.3pt},
  scirc/.style = {circle,draw,fill,inner sep=.5pt},
  invisible/.style = {draw=none,inner sep=0pt,font=\tiny},
  nonedge/.style={decorate,decoration={snake,amplitude=.3mm,segment length=1mm},draw}
}
\newcommand\yes{\textsc{Yes}}
\newcommand\no{\textsc{No}}
\newcommand\dom{\textsc{Dominating Set}}
\newcommand\contracd{\textsc{1-Edge Contraction($\gamma$)}}
\newcommand\edgecontracd{\textsc{Edge Contraction($\gamma$)}}
\newcommand\kcontracd{\textsc{$k$-Edge Contraction($\gamma$)}}
\newcommand\twocontracd{\textsc{2-Edge Contraction($\gamma$)}}
\title{Reducing the domination number of graphs via edge contractions}
\author{Esther Galby\inst{1}
\and Paloma T. Lima\inst{2} \and Bernard Ries\inst{1}}
\institute{University of Fribourg, Fribourg, Switzerland,  \texttt{esther.galby@unifr.ch,bernard.ries@unifr.ch}
\and
University of Bergen, Bergen, Norway, \texttt{paloma.lima@uib.no}
}
\begin{document}
\maketitle
\setcounter{footnote}{0}


\begin{abstract}
In this paper, we study the following problem: given a connected graph $G$, can we reduce the domination number of $G$ by at least one using $k$ edge contractions, for some fixed integer $k\geq 0$? We present positive and negative results regarding the computational complexity of this problem.
\end{abstract}


\section{Introduction}
\label{s-intro}

In a graph modification problem, we are usually interested in modifying a given graph $G$, via a small number of operations, into some other graph $G'$ that has a certain desired property. This property often describes a certain graph class to which $G'$ must belong. Such graph modification problems allow to capture a variety of classical graph-theoretic problems. Indeed, if for instance only $k$ vertex deletions are allowed and $G'$ must be a stable set or a clique, we obtain the {\sc Stable Set} or {\sc Clique} problem, respectively. 

Now, instead of specifying a graph class to which $G'$ should belong, we may ask for a specific graph parameter $\pi$ to decrease. In other words, given a graph $G$, a set $\mathcal{O}$ of one or more graph operations and an integer~$k\geq 1$, the question is whether $G$ can be transformed into a graph $G'$ by using at most $k$ operations from $\mathcal{O}$ such that $\pi(G')\leq \pi(G)-d$ for some {\it threshold} $d\geq 0$. Such problems are called {\it blocker problems} as the set of vertices or edges involved can be viewed as ``blocking'' the parameter $\pi$. Notice that identifying such sets may provide important information relative to the structure of the graph~$G$.

Blocker problems have been well studied in the literature (see for instance \cite{BTT11,bazgan2013critical,Bentz,CWP11,DPPR15,diner2018contraction,PBP,PPR16,paulusma2017blocking,paulusma2018critical,RBPDCZ10}) and relations to other well-known graph problems have been presented (see for instance \cite{DPPR15,PPR16}). So far, the literature mainly focused on the following graph parameters:  the chromatic number, the independence number, the clique number, the matching number and the vertex cover number. Furthermore, the set $\mathcal{O}$ consisted of a single graph operation, namely either vertex deletion, edge contraction, edge deletion or edge addition. Since these blocker problems are usually $\mathsf{NP}$-hard in general graphs, a particular attention has been paid to their computational complexity when restricted to special graph classes.

In this paper, we focus on another parameter, namely the domination number~$\gamma$, and we restrict $\mathcal{O}$ to a single graph operation, the edge contraction. More specifically, let $G=(V,E)$ be a graph. The {\it contraction} of an edge $uv\in E$ removes vertices $u$ and $v$ from $G$ and replaces them by a new vertex that is made adjacent to precisely those vertices that were adjacent to $u$ or $v$ in $G$ (without introducing self-loops nor multiple edges). We say that a graph $G$ can be \emph{$k$-contracted} into a graph~$G'$, if $G$ can be transformed into $G'$ by a sequence of at most~$k$ edge contractions, for an integer $k\geq 1$. We will be interested in the following problem, where $k\geq 1$ is a fixed integer.

\begin{center}
\begin{boxedminipage}{.99\textwidth}
$k$-\textsc{Edge Contraction($\gamma$)}\\
\begin{tabular}{ r p{0.8\textwidth}}
\textit{~~~~Instance:} &A connected graph $G=(V,E)$\\
\textit{Question:} &Can $G$ be $k$-edge contracted into a graph $G'$ such that $\gamma(G')~\leq~\gamma(G) -1$?
\end{tabular}
\end{boxedminipage}
\end{center}

In other words, we are interested in a blocker problem with parameter $\gamma$, graph operations set $\mathcal{O}=\{$edge contraction$\}$ and threshold $d=1$. Notice that if $\gamma(G)=1$ that is, $G$ contains a dominating vertex, then $G$ is always a \no-instance for $k$-\textsc{Edge Contraction($\gamma$)}. Reducing the domination number using edge contractions was first considered in \cite{HX10}; given a graph $G=(V,E)$, the authors denote by $ct_{\gamma}(G)$ the minimum number of edge contractions required to transform $G$ into a graph $G'$ such that $\gamma (G') \leq \gamma (G) -1$ and prove that for a connected graph $G$ such that $\gamma(G)\geq 2$, we have $ct_\gamma (G)\leq 3$. It follows that a graph $G$ with $\gamma(G)\geq 2$ is always a \yes-instance of $k$-\textsc{Edge Contraction($\gamma$)}, if $k\geq 3$. The authors \cite{HX10} further give necessary and sufficient conditions for $ct_\gamma (G)$ to be equal to 1, respectively~2.

\begin{theorem}[\cite{HX10}]
\label{theorem:contracdom}
For a connected graph $G$, the following holds.
\begin{itemize}
\item[(i)] $ct_\gamma (G)=1$ if and only if there exists a minimum dominating set in $G$ that is not a stable set.
\item[(ii)] $ct_\gamma (G)=2$ if and only if every minimum dominating set in $G$ is a stable set and there exists a dominating set $D$ in $G$ of size $\gamma(G)+1$ such that $G[D]$ contains at least two edges.
\end{itemize}
\end{theorem}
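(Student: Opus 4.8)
The plan is to prove both items via lifting arguments that pull dominating sets of a contracted graph back to dominating sets of $G$. The two ``if'' directions are quick. For $(i)$, if $G$ has a minimum dominating set $D$ that is not stable, I pick an edge $uv$ with $u,v\in D$, contract it, and note that replacing $u$ and $v$ by the new vertex gives a dominating set of size $\gamma(G)-1$ of the contracted graph, so $ct_\gamma(G)=1$. For the ``if'' direction of $(ii)$, starting from a dominating set $D$ with $|D|=\gamma(G)+1$ and at least two edges in $G[D]$, I take two such edges; contracting them one after the other --- which, when the two edges share an endpoint, amounts to contracting the connected triple they span --- turns $D$ into a dominating set of size $\gamma(G)-1$ of the twice-contracted graph, so $ct_\gamma(G)\le 2$. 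Since $(i)$ and the hypothesis that every minimum dominating set of $G$ is stable rule out $ct_\gamma(G)\le 1$, this yields $ct_\gamma(G)=2$.

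The core of the proof is the lifting step shared by the two ``only if'' directions. Suppose $G'$ is obtained from $G$ by contracting one or two disjoint connected vertex subsets $P_1,\dots,P_r$ ($r\le 2$) to vertices $w_1,\dots,w_r$, and let $D'$ be a minimum dominating set of $G'$. I build a dominating set $D$ of $G$ by starting from $D'$ with each $w_i$ deleted and, for each $i$: if $w_i\in D'$, adding all of $P_i$; if $w_i\notin D'$, fixing a vertex $z_i\in D'$ that dominates $w_i$ in $G'$ and adding to $D$ an inclusion-minimal subset $Q_i\subseteq P_i$ that dominates $P_i$ in $G$ and contains a vertex adjacent in $G$ to $z_i$ (or, when $z_i$ is itself some $w_j$, to $P_j$). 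A routine verification shows that $D$ dominates $G$ (the point being that if $w_i\notin D'$ then no vertex is dominated ``through'' $w_i$ in $G'$), that part $P_i$ increases $|D|-|D'|$ by $|P_i|-1$ if $w_i\in D'$ and by $|Q_i|\le |P_i|-1$ otherwise, and --- crucially --- that any part contributing a positive amount forces an edge inside $G[D]$ (an edge of $G[P_i]$ when $w_i\in D'$; the ``$z_iq_i$''-type edge otherwise).

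For the ``only if'' direction of $(i)$ I apply the lifting step with the single contracted pair $P_1=\{u,v\}$: the lifted set satisfies $|D|=|D'|+1\le\gamma(G)$, hence is a minimum dominating set, and it contains an edge, so it is not stable. For $(ii)$, from $ct_\gamma(G)=2$ I get a $2$-contraction of $G$ into some $G'$ with $\gamma(G')\le\gamma(G)-1$; performing the two contractions in sequence shows this is the contraction of either two vertex-disjoint edges or one connected subgraph on three vertices ($P_3$ or $K_3$), so the lifting step applies with two parts of size $2$ or one part of size $3$. Writing $|D|=|D'|+c$, the per-part bounds give $c\le 2$ while a nontrivial part forces $c\ge 1$. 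If $c=1$, then $D$ is a dominating set of size $\le\gamma(G)$ containing an edge, i.e.\ a non-stable minimum dominating set, which contradicts $ct_\gamma(G)=2$ by $(i)$; hence $c=2$ and $|D|=|D'|+2\le\gamma(G)+1$. Going through the three contraction patterns shows $G[D]$ has at least two edges, and if $|D|\le\gamma(G)$ the same contradiction recurs, so $|D|=\gamma(G)+1$ --- precisely the dominating set required in $(ii)$.

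I expect the bookkeeping in this last step to be the only delicate point: for each of the three contraction patterns and each choice of which $w_i$ lie in $D'$ (including the borderline case where some $z_i$ is itself another contracted vertex), one must confirm both that $G[D]$ contains two genuinely \emph{distinct} edges and that $|D|$ works out to exactly $\gamma(G)+1$. The domination checks in the lifting step and the two ``if'' directions are otherwise routine.
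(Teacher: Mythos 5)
The paper does not actually prove this statement: it is Theorem~\ref{theorem:contracdom}, imported from Huang and Xu~\cite{HX10} and used as a black box throughout, so there is no in-paper argument to measure yours against. Judged on its own, your lifting strategy is the natural one and it does go through. The two ``if'' directions are exactly as you describe, and the pull-back of a minimum dominating set $D'$ of the contracted graph --- splitting on whether each contracted vertex $w_i$ lies in $D'$ --- correctly yields a dominating set $D$ of $G$ with $|D|=|D'|+c$ and $c\in\{1,2\}$, after which the dichotomy ``$c=1$ gives a non-stable minimum dominating set, contradicting $(i)$; hence $c=2$ and $|D|=\gamma(G)+1$'' closes part $(ii)$. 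Your reduction of a $2$-contraction to either two disjoint edges or one connected triple is also right, since the second contracted edge either avoids the first contracted vertex or is incident to it.

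The one place where your recipe, taken literally, can fail is the choice of $Q_i$ for the three-vertex part in $(ii)$. If the contracted triple is an induced path $u$--$v$--$x$ and the vertex $z$ dominating $w$ in $G'$ is adjacent (within the triple) only to the endpoint $u$, then both $\{u,v\}$ and $\{u,x\}$ are inclusion-minimal subsets of $P$ that dominate $P$ and meet $N(z)$; choosing $\{u,x\}$ puts only the single edge $zu$ into $G[D]$, since $u$ and $x$ are nonadjacent and $x$ need not have any other neighbour in $D$. So ``inclusion-minimal'' must be strengthened to, say, ``a minimal candidate inducing, together with $z$, the maximum number of edges'' (here $\{u,v\}$, yielding the two edges $zu$ and $uv$), or the triple case must be enumerated explicitly. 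Once that is fixed --- and one notes, as you do implicitly, that a singleton $Q$ can never suffice when $ct_\gamma(G)=2$ because it would force $c=1$ --- the deferred bookkeeping does work out in every configuration, including the borderline case $z_i=w_j$, and the proof is complete.
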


To the best of our knowledge, a systematic study of the complexity of $k$-\textsc{Edge Contraction($\gamma$)} has not yet been attempted in the literature. We here initiate such a study as it has been done for other parameters and several graph operations. Our paper is organised as follows\footnote{Missing proofs will be marked by \app\ and are in the appendix for reviewing purposes.}. In Section \ref{s-pre}, we present definitions and notations that are used throughout the paper. In Section \ref{s-hard}, we prove the ($\mathsf{co}$)$\mathsf{NP}$-hardness of $k$-\textsc{Edge Contraction($\gamma$)} for $k=1,2$. We further show that $1$-\textsc{Edge Contraction($\gamma$)} is $\mathsf{W}$[1]-hard parameterized by the size of a minimum dominating set plus the mim-width of the input graph, and that it remains $\mathsf{NP}$-hard when restricted to $P_9$-free graphs, bipartite graphs and $\{C_3,\ldots , C_l\}$-free graphs for any $l \geq 3$. Finally, we present in Section \ref{s-poly} some positive results; in particular, we show that for any $k\geq 1$, $k$-\textsc{Edge Contraction($\gamma$)} is polynomial-time solvable for $P_5$-free graphs and that it can be solved in $\mathsf{FPT}$-time and $\mathsf{XP}$-time when parameterized by tree-width and mim-width, respectively.


\section{Preliminaries}
\label{s-pre}

Throughout the paper, we only consider finite, undirected, connected graphs that have no self-loops nor multiple edges. We refer the reader to~\cite{Di05} for any terminology and notation not defined here and to \cite{CYGAN} for basic definitions and terminology regarding parameterized complexity.

Let $G=(V,E)$ be a graph and let $u\in V$. We denote by $N_G(u)$, or simply $N(u)$ if it is clear from the context, the set of vertices that are adjacent to $u$ i.e., the {\it neighbors} of $u$, and let $N[u]=N(u)\cup \{u\}$. Two vertices $u,v\in V$ are said to be {\it true twins} (resp. \textit{false twins}), if $N[u]=N[v]$ (resp. if $N(u)=N(v)$).

For a family $\{H_1,\ldots,H_p\}$ of graphs, $G$ is said to be {\it $\{H_1,\ldots,H_p\}$-free} if $G$ has no induced subgraph isomorphic to a graph in $\{H_1,\ldots,H_p\}$; if $p=1$ we may write $H_1$-free instead of $\{H_1\}$-free.  
For a subset $V'\subseteq V$, we let $G[V']$ denote the subgraph of $G$ {\it induced} by $V'$, which has vertex set~$V'$ and edge set $\{uv\in E\; |\; u,v\in V'\}$.

We denote by $d_G(u,v)$, or simply $d(u,v)$ if it is clear from the context, the length of a shortest path from $u$ to $v$ in $G$. Similarly, for any subset $V'\subseteq V$, we denote by $d_G(u,V')$, or simply $d(u,V')$ if it is clear from the context, the minimum length of a shortest path from $u$ to some vertex in $V'$ i.e., $d(u,V')=\min_{v\in V'}d(u,v)$.

For a vertex $v\in V$, we write $G-v=G[V\setminus \{v\}]$ and for a subset $V'\subseteq V$ we write $G-V'=G[V\setminus V']$. For an edge $e \in E$, we denote by $G\backslash e$ the graph obtained from $G$ by contracting the edge $e$. The $k$-{\it subdivision} of an edge $uv$ consists in replacing it by a path $u$-$v_1$-$\ldots$-$v_k$-$v$, where $v_1,\ldots,v_k$ are new vertices.

For $n\geq 1$, the path and cycle on $n$ vertices are denoted by $P_n$ and $C_n$ respectively. A graph is {\it bipartite} if every cycle contains an even number of vertices.
 
A subset $S\subseteq V$ is called an {\it stable set} of $G$ if any two vertices in $S$ are nonadjacent; we may also say that $S$ is \textit{stable}. A subset $D\subseteq V$ is called a {\it dominating set}, if every vertex in $V\setminus D$ is adjacent to at least one vertex in $D$; the {\it domination number} $\gamma(G)$ is the number of vertices in a minimum dominating set. For any $v \in D$ and $u \in N[v]$, $v$ is said to \textit{dominate} $u$ (in particular, $v$ dominates itself); furthermore, $u$ is a \textit{private neighbor of $v$ with respect to $D$} if $u$ has no neighbor in $D\backslash \{v\}$. We say that \textit{$D$ contains an edge} (or more) if the graph $G[D]$ contains an edge (or more). The {\sc Dominating Set} problem is to test whether a given graph $G$ has a dominating set of size at most~$\ell$, for some given integer $\ell\geq 0$. 


\section{Hardness results}
\label{s-hard}

In this section, we present hardness results for the $k$-\textsc{Edge Contraction($\gamma$)} problem. Recall that for $k\geq 3$, the problem is trivial; we show that for $k=1,2$, it becomes ($\mathsf{co}$)$\mathsf{NP}$-hard. To this end, we introduce the following problem.

\begin{center}
\begin{boxedminipage}{.99\textwidth}
\textsc{\sc Contraction Number($\gamma$,$k$)}\\[2pt]
\begin{tabular}{ r p{0.8\textwidth}}
\textit{~~~~Instance:} &A connected graph $G=(V,E)$.\\
\textit{Question:} &Is $ct_{\gamma}(G) = k$?
\end{tabular}
\end{boxedminipage}
\end{center}

\begin{theorem}
\label{thm:ct3}
{\sc Contraction Number($\gamma$,$3$)} is $\mathsf{NP}$-hard.
\end{theorem}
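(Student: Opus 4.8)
The plan is to reduce from \textsc{Dominating Set}, exploiting Theorem~\ref{theorem:contracdom}. Recall that $ct_\gamma(G) \le 3$ for every connected $G$ with $\gamma(G) \ge 2$, and $ct_\gamma(G) = 3$ precisely when $ct_\gamma(G) \ne 1$ and $ct_\gamma(G) \ne 2$; by Theorem~\ref{theorem:contracdom} this means: every minimum dominating set of $G$ is stable, \emph{and} there is no dominating set $D$ of size $\gamma(G)+1$ with $G[D]$ containing two edges. So, given an instance $(H,\ell)$ of \textsc{Dominating Set}, I would build a connected graph $G$ together with a target value $t$ such that $G$ has a dominating set of size $\le \ell$ if and only if $G$ satisfies one of the two \textsc{Yes}-conditions for $ct_\gamma \le 2$; equivalently, such that $(H,\ell)$ is a \textsc{No}-instance iff $ct_\gamma(G) = 3$. (The usual convention — that \textsc{NP}-hardness of a decision problem covers the complement reduction as well — lets us phrase it either way.)

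\textbf{The gadget.}

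First I would normalise the \textsc{Dominating Set} instance so that $\gamma(H)$ is forced to a known value; this can be done by a padding construction (for instance, attach to $H$ a fixed number of pendant structures each contributing exactly one to the domination number, so that $\gamma$ of the padded graph equals $\ell+1$ exactly when $H$ has no dominating set of size $\le\ell$, and equals something we control otherwise). Then, on top of the padded graph $H'$, I would add a small ``control gadget'' $X$ — a constant-size graph attached to $H'$ in a controlled way — whose role is to guarantee two things simultaneously: (a) $\gamma(G) = \gamma(H')$, with every minimum dominating set of $G$ being stable (so that the $ct_\gamma = 1$ case never applies, isolating the dichotomy at the level of size-$(\gamma+1)$ sets); and (b) the existence of a dominating set of size $\gamma(G)+1$ inducing two edges is equivalent to $H$ having a small dominating set. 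The cleanest way to force stability of all minimum dominating sets is to make every vertex that could lie in such a set have, in $G$, a private neighbor forcing it away from any of its neighbors — e.g. by subdividing edges or hanging leaves appropriately, a standard trick. The size-$(\gamma+1)$ condition is then engineered so that the ``extra'' vertex can form two edges inside $D$ only by using a specific gadget vertex together with a dominating configuration in $H'$ of the right size.

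\textbf{Putting it together and the main obstacle.}

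With such a $G$ in hand, $ct_\gamma(G) \ne 1$ always (by~(a)), so $ct_\gamma(G) \in \{2,3\}$, and $ct_\gamma(G) = 2$ iff the size-$(\gamma+1)$ two-edge dominating set exists iff $(H,\ell)$ is a \textsc{Yes}-instance; hence $ct_\gamma(G) = 3$ iff $(H,\ell)$ is a \textsc{No}-instance, giving \textsf{coNP}-hardness of the complement and therefore \textsf{NP}-hardness of \textsc{Contraction Number}$(\gamma,3)$ (after noting membership issues are irrelevant for a hardness claim). The main obstacle I anticipate is the simultaneous control of \emph{all three} quantities — $\gamma(G)$ itself, the stability of every minimum dominating set, and the precise structure of size-$(\gamma+1)$ dominating sets — without the gadget accidentally creating a cheaper dominating configuration or a non-stable minimum dominating set that collapses $ct_\gamma$ to~$1$. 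Getting the padding counts and the attachment of $X$ exactly right so that these interact cleanly, and verifying the ``only if'' direction (that a $ct_\gamma = 2$ certificate really must come from a small dominating set of $H$, rather than from some unintended interaction of the gadget with $H'$), is where the bulk of the careful case analysis will go.
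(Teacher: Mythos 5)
There are two genuine gaps here. The first is the polarity of your reduction. You arrange matters so that $(H,\ell)$ is a \no-instance of \dom{} if and only if $ct_\gamma(G)=3$; that is a many-one reduction from the \emph{complement} of \dom{} to {\sc Contraction Number($\gamma$,$3$)}, and it establishes $\mathsf{coNP}$-hardness, not $\mathsf{NP}$-hardness. Your remark that ``the usual convention lets us phrase it either way'' is false under Karp reductions (it would require $\mathsf{NP}=\mathsf{coNP}$), and the distinction is load-bearing in this paper: the authors prove $\mathsf{NP}$-hardness of {\sc Contraction Number($\gamma$,$3$)} precisely so that they can deduce $\mathsf{coNP}$-hardness of \twocontracd{} as a corollary, and they separately record $\mathsf{coNP}$-hardness versus $\mathsf{NP}$-hardness of the related questions. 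To get the stated theorem you must map \yes-instances of an $\mathsf{NP}$-hard problem to graphs with $ct_\gamma=3$, i.e.\ a positive certificate of the source problem must \emph{force} the universally quantified conditions (every minimum dominating set is stable, and no dominating set of size $\gamma+1$ contains two edges). The paper achieves this by reducing from {\sc 1-in-3 Positive 3-Sat}: satisfiability forces $\gamma(G_\Phi)$ down to its absolute lower bound $|X|+4|C|$, and at that bound the dominating sets are so rigidly structured that $ct_\gamma=3$ follows, while unsatisfiability makes $\gamma$ strictly larger and creates slack that yields $ct_\gamma\le 2$. This inversion is the key idea your plan is missing.

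The second gap is that the construction itself is absent. You correctly identify the crux --- simultaneously controlling $\gamma(G)$, the stability of all minimum dominating sets, and the structure of all dominating sets of size $\gamma(G)+1$, without the gadgets leaking cheaper or non-stable configurations --- but you defer it entirely (``where the bulk of the careful case analysis will go''). That case analysis is essentially the whole proof: the paper needs a bespoke clause gadget (a clique $K_c$ wired to paths $t_i\,a_i\,b_i\,x_i$ and to triangle variable gadgets), a chain of seven structural observations about where a dominating set can intersect each gadget, and a further appendix argument ruling out any size-$(\gamma+1)$ dominating set with two edges. A padding-plus-control-gadget sketch does not carry the burden of proof for a statement whose entire content is that such a construction exists.
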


\begin{proof}
We reduce from \textsc{1-in-3 Positive 3-Sat}, where each variable occurs only positively, each clause contains exactly three positive literals, 
and we want a truth assignment such that each clause contains exactly one true variable. This problem is known to be $\mathsf{NP}$-complete \cite{garey}. Given an instance $\Phi$ of this problem, with variable set $X$ and clause set $C$, we construct an equivalent instance $G_{\Phi}$ of {\sc Contraction Number($\gamma$,$3$)} as follows. For any variable $x \in X$, we introduce a copy of $C_3$, which we denote by $G_x$, with two distinguished \textit{truth vertices} $T_x$ and $F_x$ (see Fig. \ref{fig:clausegad}); in the following, the third vertex of $G_x$ is denoted by $u_x$. For any clause $c \in C$ containing variables $x_1,x_2$ and $x_3$, we introduce the gadget $G_c$ depicted in Fig. \ref{fig:clausegad} (where it is connected to the corresponding variable gadgets). The vertex set of the clique $K_c$ corresponds to the set of subsets of size 1 of $\{x_1,x_2,x_3\}$ (hence the notation); for any $i \in \{1,2,3\}$, the vertex $x_i$ (resp. $x'_i$) is connected to every vertex $v_S\in K_c$ such that $x_i \not\in S$ (resp. $x_i \in S$). Finally, for $i=1,2,3$, we add an edge between $t_i$ (resp. $x'_i$) and the truth vertex $T_{x_i}$ (resp. $F_{x_i}$). Our goal now is to show that $\Phi$ is satisfiable if and only if $ct_{\gamma} (G_{\Phi}) = 3$. In the remainder of the proof, given a clause $c \in C$, we denote by $x_1$, $x_2$ and $x_3$ the variables occuring in $c$ and thus assume that $t_i$ (resp. $x'_i$) is adjacent to $T_{x_i}$ (resp. $F_{x_i}$) for $i \in \{1,2,3\}$. Let us first start with some easy observations.

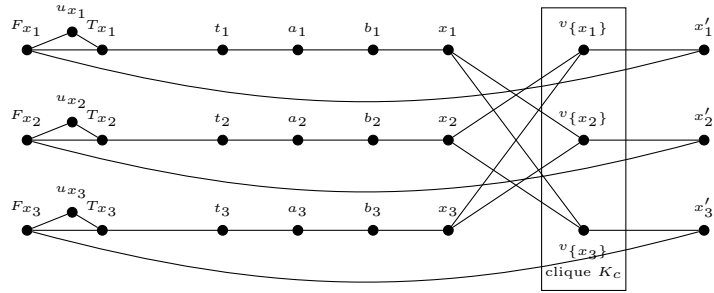
\begin{figure}[htb]
\centering
\begin{tikzpicture}[node distance=1cm,scale=.8]
\node[circ,label=above:{\tiny $T_{x_3}$}] (T3) at (-2,0) {};
\node[circ,label=above:{\tiny $u_{x_3}$}] (u3) at (-2.5,.3) {};
\node[circ,label=above:{\tiny $F_{x_3}$},left of=T3] (F3) {};

\node[circ,label=above:{\tiny $T_{x_2}$}] (T2) at (-2,1.5) {};
\node[circ,label=above:{\tiny $u_{x_2}$}] (u2) at (-2.5,1.8) {};
\node[circ,label=above:{\tiny $F_{x_2}$},left of=T2] (F2) {};

\node[circ,label=above:{\tiny $T_{x_1}$}] (T1) at (-2,3) {};
\node[circ,label=above:{\tiny $u_{x_1}$}] (u1) at (-2.5,3.3) {};
\node[circ,label=above:{\tiny $F_{x_1}$},left of=T1] (F1) {};

\node[circ,label=above:{\tiny $t_3$}] (t3) at (0,0) {};
\node[circ,label=above:{\tiny $a_3$},right of =t3] (a3) {};
\node[circ,label=above:{\tiny $b_3$},right of =a3] (b3) {};
\node[circ,label=above:{\tiny $x_3$},right of =b3] (l3) {};

\node[circ,label=above:{\tiny $t_2$}] (t2) at (0,1.5) {};
\node[circ,label=above:{\tiny $a_2$},right of =t2] (a2) {};
\node[circ,label=above:{\tiny $b_2$},right of =a2] (b2) {};
\node[circ,label=above:{\tiny $x_2$},right of =b2] (l2) {};

\node[circ,label=above:{\tiny $t_1$}] (t1) at (0,3) {};
\node[circ,label=above:{\tiny $a_1$},right of =t1] (a1) {};
\node[circ,label=above:{\tiny $b_1$},right of =a1] (b1) {};
\node[circ,label=above:{\tiny $x_1$},right of =b1] (l1) {};

\node[circ,label=below:{\tiny $v_{\{x_3\}}$}] (x3) at (6,0) {};
\node[circ,label=above:{\tiny $v_{\{x_2\}}$}] (x2) at (6,1.5) {};
\node[circ,label=above:{\tiny $v_{\{x_1\}}$}] (x1) at (6,3) {};

\node[circ,label=above:{\tiny $x'_3$}] (l'3) at (8,0) {};
\node[circ,label=above:{\tiny $x'_2$}] (l'2) at (8,1.5) {};
\node[circ,label=above:{\tiny $x'_1$}] (l'1) at (8,3) {};

\draw (F3) -- (l3)
(F2) -- (l2)
(F1) -- (l1);

\draw (T3) -- (u3)
(F3) -- (u3)
(T2) -- (u2)
(F2) -- (u2)
(T1) -- (u1)
(F1) -- (u1);

\draw (l1) -- (x2)
(l1) -- (x3);

\draw (l'1) -- (x1);

\draw (l2) -- (x1)
(l2) -- (x3);

\draw (l'2) -- (x2);

\draw (l3) -- (x1)
(l3) -- (x2);

\draw (l'3) -- (x3);

\draw (F1) edge[bend right=15] (l'1);
\draw (F2) edge[bend right=15] (l'2);
\draw (F3) edge[bend right=15] (l'3);

\draw (5.3,-1) rectangle (6.7,3.7);
\node[draw=none] at (6,-.7) {\tiny clique $K_c$};
\end{tikzpicture}
\caption{The gadget $G_c$ for a clause $c \in C$ containing variables $x_1$, $x_2$ and $x_3$ (the rectangle indicates that the corresponding set of vertices induces a clique).}
\label{fig:clausegad}
\end{figure}

\begin{observation}
\label{obs:S}
Let $D$ be a dominating set of $G_{\Phi}$. Then for any $x \in X$, $\vert D \cap V(G_x) \vert \geq 1$ and for any $c \in C$, $\vert D \cap V(G_c) \vert \geq 4$. In particular, $\vert D \vert \geq \vert X \vert + 4 \vert C \vert$.
\end{observation}

Clearly, for any $x \in X$, $\vert D \cap G_x \vert \geq 1$ since $u_x$ must be dominated. Also, in order to dominate vertices $a_1,a_2,a_3$ and $v_{\{x_1\}}$ in some gadget $G_c$, we need at least 4 distinct vertices, since their neighborhoods are pairwise disjoint and so, $\vert G \cap V(G_c) \vert \geq 4$, for any $c \in C$.

\begin{observation}
\label{obs:Si}
Let $D$ be a dominating set of $G_{\Phi}$. For any clause gadget $G_c$ and $i \in \{1,2,3\}$, $D \cap \{a_i,b_i,x_i\} \neq \emptyset$.
\end{observation}

This immediately follows from the fact that every vertex $b_i$ needs to be dominated and its neighbors are $a_i$ and $x_i$ for $i\in \{1,2,3\}$.

\begin{observation}
\label{obs:Snotdom}
Let $D$ be a dominating set of $G_{\Phi}$. For any clause gadget $G_c$, if $\vert D \cap V(G_c) \vert = 4$, then $D \cap \{t_i,x'_i\} = \emptyset$ and $\vert D \cap \{a_i,b_i,x_i\} \vert =1$, for any $i \in \{1,2,3\}$. 
\end{observation}

If $t_i \in D$ for some $i \in \{1,2,3\}$, then it follows from Observation \ref{obs:Si} that $\vert D \cap \{a_j,b_j,x_j\} \vert = 1$ for any $j \in \{1,2,3\}$. This implies that at least two vertices among $x_1,x_2$ and $x_3$ belong to $D$ for otherwise there would exist $j\in \{1,2,3\}$ such that $v_{\{x_j\}}$ is not dominated. In particular, there must exist $j \neq i$ such that $x_j \in D$; but then, $a_j$ is not dominated. Similarly, if $x'_i \in D$ for some $i \in \{1,2,3\}$, it follows from Observation \ref{obs:Si} that  $\vert D \cap \{a_j,b_j,x_j\} \vert = 1$ for any $j \in \{1,2,3\}$. But then, in order to dominate the vertices of $K_c$, either $x_i \in D$ but then $a_i$ is not dominated; or $\{x_j, j\neq i\} \subset D$ and $a_j$ with $j \neq i$, is not dominated.

Now suppose that $\vert D \cap \{a_i,b_i,x_i\} \vert \geq 2$ for some $i\in \{1,2,3\}$. Then by Observation~\ref{obs:Si}, we conclude that $\vert D \cap \{a_k,b_k,x_k\} \vert = 1$ for $k \neq i$ and $\vert D \cap \{a_i,b_i,x_i\} \vert = 2$. This implies that $D \cap V(K_c) = \emptyset$ for otherwise we would have $\vert D \cap V(G_c) \vert \geq 5$. But then, since $x'_i \not\in D$, $D$ must contain at least two vertices among $x_1,x_2$ and $x_3$ in order to dominate the vertices of $K_c$; in particular, there exists $j \neq i$ such that $x_j \in D$ and so, $a_j$ is not dominated. 

\begin{observation}
\label{obs:ct3}
Let $D$ be a minimum dominating set of $G_{\Phi}$ and suppose that $ct_{\gamma}(G_{\Phi}) = 3$. Then for any vertices $u,v\in D$, we have $d(u,v)\geq 3$. 
\end{observation}

Indeed, if $u,v$ are at distance at most 2, we conclude by Theorem \ref{theorem:contracdom}(ii) that $ct_{\gamma}(G_{\Phi}) \leq 2$, a contradiction.

\begin{observation}
\label{obs:aifx}
Let $D$ be a minimum dominating set of $G_{\Phi}$ and suppose that $ct_{\gamma}(G_{\Phi}) = 3$. Then for any clause gadget $G_c$ and $i \in \{1,2,3\}$, $a_i \in D$ if and only if $T_{x_i} \not\in D$.
\end{observation}

This readily follows from Observation~\ref{obs:ct3}. Further note that we may assume that for any $i \in \{1,2,3\}$, $a_i \in D$ if and only if $F_{x_i} \in D$; $T_{x_i} \not\in D$ is equivalent to $\{F_{x_i},u_{x_i}\} \cap D \neq \emptyset$ and we may always replace $D$ by $(D \backslash \{u_{x_i}\}) \cup \{F_{x_i}\}$.

\begin{observation}
\label{obs:ai}
Let $D$ be a minimum dominating set of $G_{\Phi}$ and suppose that $ct_{\gamma}(G_{\Phi}) = 3$. Then for any clause gadget $G_c$, $\vert D\cap \{a_1,a_2,a_3\}\vert \leq 2$.
\end{observation}

If it weren't the case then, by Observation \ref{obs:ct3}, no $x_i$ or $b_i$ ($i=1,2,3$) would belong to $D$. But since $x_1,x_2$ and $x_3$ must be dominated, it follows that $D \cap V(K_c) \neq \emptyset$ and by Observation \ref{obs:aifx}, we conclude that $D$ contains two vertices at distance two (namely, $v_{\{x_i\}} \in D \cap V(K_c)$ and $F_{x_i}$ for some $i\in \{1,2,3\}$).

\begin{observation}
\label{obs:bi}
Let $D$ be a minimum dominating set of $G_{\Phi}$ and suppose that $ct_{\gamma}(G_{\Phi}) = 3$. Then for any clause gadget $G_c$, $\vert D\cap \{b_1,b_2,b_3\}\vert \leq 1$.
\end{observation}

Indeed, if we assume, without loss of generality, that $b_1,b_2 \in D$, then by Observation \ref{obs:ct3}, $D \cap V(K_c) = \emptyset$. Furthermore, if $a_k \not\in D$ then $b_k \in D$ since $a_k$ must be dominated and $t_k \not\in D$ (it would otherwise be within distance at most 2 from a vertex in $D$ belonging to $G_{x_k}$); and, if $a_k \in D$, then $b_k \not\in D$ by Observation \ref{obs:ct3}. Thus, we conclude by Observation \ref{obs:aifx} that $b_k \in D$ if and only if $T_{x_k} \in D$. It then follows from Observation \ref{obs:ct3} that $x'_1,x'_2 \not\in D$ and so, $x_3,x'_3 \in D$ for otherwise at least one vertex in $K_c$ would not be dominated. But then, $a_3$ is not dominated seeing that by Observation \ref{obs:ct3}, $a_3,b_3 \not\in D$ since $x_3 \in D$, and $t_3 \not\in D$ as shown previously. 

\begin{Claim}[\app]
\label{clm:ct3}
$\gamma (G_{\Phi}) = \vert X \vert + 4 \vert C \vert$ if and only if $ct_{\gamma} (G_{\Phi}) = 3$.
\end{Claim}

\begin{Claim}
\label{clm:phisat2}
$\gamma (G_{\Phi}) = \vert X \vert + 4 \vert C \vert$ if and only if $\Phi$ is satisfiable.
\end{Claim}

Assume first that $\gamma (G_{\Phi}) = \vert X \vert + 4 \vert C \vert$ and consider a minimum dominating set $D$ of $G_{\Phi}$. We construct a truth assignment from $D$ satisfying $\Phi$ as follows. For any $x \in X$, if $T_x \in D$, set $x$ to true; otherwise, set $x$ to false. We claim that each clause $c \in C$ has exactly one true variable. Indeed, it follows from Observation~\ref{obs:S} that $\vert D \cap V(G_c) \vert = 4$ for any $c \in C$, and from Claim \ref{clm:ct3} that $ct_{\gamma} (G_{\Phi}) = 3$. But then, by Observation \ref{obs:Snotdom}, for any $i \in \{1,2,3\}$, $a_i \not\in D$ if and only if $b_i \in D$ ($a_i$ would otherwise not be dominated). It then follows from Observations \ref{obs:ai} and \ref{obs:bi} that $\vert D\cap \{a_1,a_2,a_3\}\vert = 2$ and $\vert D\cap \{b_1,b_2,b_3\}\vert = 1$ for any $c \in C$; but by Observation \ref{obs:aifx} we conclude that  $b_i \in D$ if and only if $T_{x_i} \in D$, which proves our claim.

Conversely, assume that $\Phi$ is satisfiable and consider a truth assignment satisfying $\Phi$. We construct a dominating set $D$ of $G_{\Phi}$ as follows. If variable $x$ is set to true, we add $T_x$ to $D$; otherwise, we add $F_x$ to $D$. For any clause $c \in C$ and $i \in \{1,2,3\}$, if $T_{x_i} \in D$, then add $b_i$ to $D$; otherwise, add $a_i$ to $D$. Since every clause has exactly one true variable, it follows that $\vert D \cap \{b_1,b_2,b_3\} \vert = 1$ and $\vert D \cap \{a_1,a_2,a_3\} \vert = 2$; finally add $v_{\{x_i\}}$ to $D$ where $b_i \in D$. Now clearly $\vert D \cap V(G_c) \vert = 4$ and every vertex in $G_c$ is dominated. Thus, $\vert D \vert =  \vert X \vert + 4 \vert C \vert$ and so by Observation \ref{obs:S}, $ \gamma (G_{\Phi})= \vert X \vert + 4 \vert C \vert$, which concludes this proof. \\

Now combining Claims \ref{clm:ct3} and \ref{clm:phisat2}, we have that $\Phi$ is satisfiable if and only if $ct_{\gamma} (G_{\Phi}) = 3$ which completes the proof of Theorem \ref{thm:ct3}.
\qed
\end{proof}

By observing that for any graph $G$, $G$ is a \yes-instance for {\sc Contraction Number($\gamma$,$3$)} if and only if $G$ is a \no-instance for $2$-\textsc{Edge Contraction($\gamma$)}, we deduce the following corollary from Theorem \ref{thm:ct3}.

\begin{corollary}
$2$-\textsc{Edge Contraction($\gamma$)} is $\mathsf{coNP}$-hard.
\end{corollary}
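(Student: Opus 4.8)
The plan is to reuse, essentially verbatim, the graph $G_\Phi$ constructed in the proof of Theorem~\ref{thm:ct3}, but now to read the reduction ``in complement''. The single fact to establish is the observation announced just before the corollary: for every connected graph $G$ with $\gamma(G)\geq 2$, $G$ is a \no-instance of \twocontracd{} if and only if $G$ is a \yes-instance of \threecontracnum{}. Granting this, the corollary is immediate. The map $\Phi\mapsto G_\Phi$ is polynomial-time computable, and the graphs it produces satisfy $\gamma(G_\Phi)=|X|+4|C|\geq 2$; hence, combining Theorem~\ref{thm:ct3} with the observation, $\Phi$ is satisfiable $\iff ct_\gamma(G_\Phi)=3 \iff G_\Phi$ is a \no-instance of \twocontracd{}. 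Thus $\Phi\mapsto G_\Phi$ is a polynomial-time many-one reduction from \textsc{1-in-3 Positive 3-Sat} to the complement of \twocontracd{}, and since the former problem is $\mathsf{NP}$-hard, \twocontracd{} is $\mathsf{coNP}$-hard.

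To prove the observation, I would first invoke the result of \cite{HX10} recalled in the introduction: for any connected graph $G$ with $\gamma(G)\geq 2$ one has $ct_\gamma(G)\leq 3$, so $ct_\gamma(G)$ is a well-defined element of $\{1,2,3\}$. Next I would unfold the definition of \twocontracd{}: an instance $G$ is a \yes-instance exactly when $G$ can be $k$-edge contracted, for some $k\leq 2$, into a graph $G'$ with $\gamma(G')\leq\gamma(G)-1$, that is, exactly when $ct_\gamma(G)\leq 2$. Since $ct_\gamma(G)\in\{1,2,3\}$, this fails precisely when $ct_\gamma(G)=3$, i.e. precisely when $G$ is a \yes-instance of \threecontracnum{}. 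This gives the claimed biconditional, from which the reduction above goes through.

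The only delicate point — and essentially the only thing that needs care — is that the equivalence ``$G$ is a \no-instance of \twocontracd{} $\iff ct_\gamma(G)=3$'' relies on $ct_\gamma(G)$ being finite, which is exactly where the bound $ct_\gamma(G)\leq 3$ of \cite{HX10}, and hence the hypothesis $\gamma(G)\geq 2$ (guaranteed here by $|X|+4|C|\geq 2$ for any nontrivial instance $\Phi$), is used; without it, a graph containing a dominating vertex would be a \no-instance of both problems and the biconditional would break. Everything else is a purely formal restatement of Theorem~\ref{thm:ct3}, so I do not anticipate any real obstacle.
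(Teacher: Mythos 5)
Your proposal is correct and follows essentially the same route as the paper, which derives the corollary from Theorem~\ref{thm:ct3} via the observation that $G$ is a \yes-instance for {\sc Contraction Number($\gamma$,$3$)} if and only if it is a \no-instance for $2$-\textsc{Edge Contraction($\gamma$)}. Your added care about the finiteness of $ct_\gamma$ (i.e.\ that the biconditional needs $\gamma(G)\geq 2$, which $G_\Phi$ satisfies) is a sensible clarification of a point the paper leaves implicit.
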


It is thus $\mathsf{coNP}$-hard to decide whether $ct_{\gamma} (G) \leq 2$ for a graph $G$; and in fact, it is $\mathsf{NP}$-hard to decide whether equality holds, as stated in the following.

\begin{theorem}[\app]
\label{thm:gamma2}
{\sc Contraction Number($\gamma$,$2$)} is $\mathsf{NP}$-hard.
\end{theorem}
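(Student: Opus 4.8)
The plan is to adapt the reduction from \textsc{1-in-3 Positive 3-Sat} that was used in the proof of Theorem~\ref{thm:ct3}, so that the constructed graph has $ct_\gamma$ exactly $2$ precisely when $\Phi$ is satisfiable, and $ct_\gamma=3$ otherwise (it can never be $1$ by design). Concretely, I would start from the same clause/variable gadgetry but modify it so that \emph{every} minimum dominating set is forced to be stable (this ensures $ct_\gamma\ge 2$ by Theorem~\ref{theorem:contracdom}(i)), while the existence of a dominating set of size $\gamma(G)+1$ inducing at least two edges (equivalently $ct_\gamma=2$ by Theorem~\ref{theorem:contracdom}(ii)) becomes equivalent to satisfiability. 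One clean way to force stability of all minimum dominating sets is a small local modification: attach to each vertex that could conceivably create an edge inside a minimum dominating set a pendant structure (e.g. a pendant $P_2$ or a pair of pendant vertices) so that any vertex ``paid for'' in an edge would leave a private pendant undominated. After that, the crux is to re-run the counting arguments analogous to Observations~\ref{obs:S}--\ref{obs:bi} and Claims~\ref{clm:ct3}, \ref{clm:phisat2} for the modified graph.

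The key steps, in order, are: (1) describe the modified graph $G_\Phi'$; (2) compute $\gamma(G_\Phi')$ exactly in terms of $|X|$ and $|C|$ by lower-bounding it with disjoint-neighbourhood arguments (as in Observation~\ref{obs:S}) and exhibiting a matching dominating set; (3) prove that \emph{every} minimum dominating set of $G_\Phi'$ is stable, hence $ct_\gamma(G_\Phi')\ge 2$; (4) show that $ct_\gamma(G_\Phi')=2$ if and only if there is a dominating set of size $\gamma(G_\Phi')+1$ with at least two edges in its induced subgraph, and argue via the gadget structure that such a set exists exactly when $\Phi$ admits a satisfying assignment (the budget of one extra vertex lets us ``flip'' one clause gadget into an edge-containing configuration, and the 1-in-3 constraint is what makes this globally consistent); and (5) conclude that $ct_\gamma(G_\Phi')=2$ iff $\Phi$ is satisfiable, and otherwise $ct_\gamma(G_\Phi')=3$ by Theorem~\ref{theorem:contracdom} together with the fact that for connected $G$ with $\gamma(G)\ge 2$ we always have $ct_\gamma(G)\le 3$.

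Alternatively — and this may be the shorter route — one can try to use Theorem~\ref{thm:ct3} more directly by a generic reduction: given a graph $G$ (an instance of \textsc{Contraction Number($\gamma$,$3$)}) one would like to build $G'$ with $ct_\gamma(G')=2$ iff $ct_\gamma(G)=3$. This seems hard to do in a black-box way, since lowering $ct_\gamma$ from $3$ to $2$ requires creating, in $G'$, a size-$(\gamma+1)$ dominating set with two edges without creating a non-stable minimum dominating set or a closer pair of vertices; I do not see how to engineer that without re-opening the gadget. So I expect the self-contained reduction of the previous paragraph to be the way to go, and I expect \textbf{step (4)} — pinning down exactly when a one-vertex-over-optimum dominating set with two edges exists, and tying this to the 1-in-3 satisfiability condition — to be the main obstacle, since it requires carefully tracking which pairs of vertices in the gadget are at distance~$\le 2$ and how an extra vertex can be spent to make a clause gadget ``contractible'' without violating domination elsewhere.
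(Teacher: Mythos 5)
There is a genuine gap: your proposal is a plan rather than a proof, and the step you yourself flag as the crux --- step (4), engineering the gadget so that a dominating set of size $\gamma+1$ containing two edges exists \emph{if and only if} $\Phi$ is satisfiable, while simultaneously keeping every minimum dominating set stable --- is left entirely unexecuted, and it is precisely where the difficulty lies. Note how strong the requirement is in your unsatisfiable case: to get $ct_\gamma=3$ you must rule out \emph{both} a non-stable minimum dominating set \emph{and} any $(\gamma+1)$-size dominating set with two edges; by Theorem~\ref{theorem:contracdom} this forces every pair of vertices in every minimum dominating set to be at distance at least $3$ (cf.\ Observation~\ref{obs:ct3}). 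Your proposed fix of attaching pendant structures to enforce stability works against you here, because pendant paths and pendant pairs create exactly the kind of distance-$2$ configurations and one-extra-vertex flexibility (e.g.\ a vertex adjacent to two members of a stable minimum dominating set) that make a two-edge $(\gamma+1)$-set hard to exclude. You have not exhibited a gadget for which this can be verified.

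The paper's actual proof sidesteps this by inverting the logic. It reduces from \textsc{Exactly 3-Bounded 3-Sat} (not from the 1-in-3 variant) and adds a fixed auxiliary graph $H$ whose unique minimum dominating set is stable but which always contains a dominating set of size $\gamma(H)+1$ with two edges, so a two-edge $(\gamma+1)$-certificate is \emph{unconditionally} available. Satisfiability then controls only whether $\gamma(G_\Phi)$ attains the combinatorial lower bound $2|X|+3$: if it does, every minimum dominating set decomposes gadget-by-gadget and is forced to be stable, giving $ct_\gamma=2$; if it does not, the slack produces a non-stable minimum dominating set and $ct_\gamma=1$. In other words, the unsatisfiable case lands on $ct_\gamma=1$, not $ct_\gamma=3$, which is far easier to certify. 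If you want to salvage your approach, you should either switch to this ``always provide the two-edge witness, let satisfiability control stability of minimum dominating sets'' design, or give a concrete gadget together with the full distance-$\geq 3$ verification for the unsatisfiable case.
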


We finally consider the case $k=1$.

\begin{theorem}
\label{theorem:1econtracd}
$1$-\textsc{Edge Contraction($\gamma$)} is $\mathsf{NP}$-hard even when restricted to $P_t$-free graphs, with $t \geq 9$.
\end{theorem}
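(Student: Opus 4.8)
The plan is to reduce from \textsc{Dominating Set} restricted to a class of graphs where the problem is already known to be \textsf{NP}-hard and, crucially, which produces instances of bounded radius so that the $P_t$-freeness can be enforced. Recall that by Theorem~\ref{theorem:contracdom}(i), a connected graph $G$ with $\gamma(G)\geq 2$ is a \yes-instance of $1$-\textsc{Edge Contraction($\gamma$)} if and only if $G$ admits a minimum dominating set that is not stable, equivalently, if and only if there is a minimum dominating set containing two adjacent vertices. So the target of the reduction is to build, from an instance $(H,\ell)$ of \textsc{Dominating Set}, a connected graph $G_H$ such that $G_H$ has a minimum dominating set containing an edge if and only if $H$ has a dominating set of size at most $\ell$. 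The key difficulty is that we must simultaneously control $\gamma(G_H)$ exactly (so that we know what "minimum" means), force any dominating set of that size to ``look like'' a dominating set of $H$ together with some rigid gadget part, and make the only possible adjacency inside a minimum dominating set occur precisely when the $H$-part has size $\le \ell$ rather than $\ell+1$ or more.

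Concretely, I would take the vertex set of $H$ and attach to each vertex $v$ a small private pendant gadget (for instance a pendant path or a pendant vertex with its own pendant, a standard trick) that forces $\gamma$ up by a controlled amount and makes it advantageous or even mandatory to pick vertices of $V(H)$ (or their gadget copies) into any minimum dominating set; additionally I would add a global ``selector'' gadget, attached to all of $V(H)$, whose role is to be dominable cheaply exactly when at least one more vertex beyond a size-$\ell$ core is free to be spent. One clean way: build $G_H$ so that $\gamma(G_H) = |V(H)| + \ell + 1$ (or a similar explicit formula), where the ``$+\ell$'' worth of budget must be spent dominating a clique-like or independent gadget that is only efficiently dominated when the chosen $H$-vertices form a dominating set of $H$, and the ``$+1$'' can be realised by an edge (two adjacent gadget vertices) exactly when the $H$-part used only $\ell$ rather than needing more. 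Then a minimum dominating set of $G_H$ contains an edge iff $H$ has a dominating set of size $\le \ell$, which by Theorem~\ref{theorem:contracdom}(i) is iff $G_H$ is a \yes-instance.

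For the $P_t$-freeness with $t\geq 9$: the plan is to make every pendant gadget have small diameter (say a single pendant edge or a $P_2$ hung off each vertex, contributing radius at most $2$ from $V(H)$) and to route everything through a universal-ish structure so that any two vertices of $G_H$ are within a bounded number of steps; if the core of $G_H$ restricted to $V(H)$ is made into a clique (or is dominated through a common clique gadget $K$ all of whose vertices see all of $V(H)$), then the eccentricity of every vertex is a small constant and hence $G_H$ is $P_t$-free for the appropriate constant $t$, which one then checks is $\le 9$. Turning $V(H)$ into a clique is harmless for the construction because domination only cares about closed neighbourhoods and we can record the original adjacency of $H$ in a separate layer of ``clause-like'' vertices (one per non-edge, or per vertex, forcing the domination constraint) — this is the same style of layered gadget used in the proof of Theorem~\ref{thm:ct3}. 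One must double-check that adding these extra layers does not create a long induced path and does not accidentally lower $\gamma(G_H)$ or create an unwanted edge inside a minimum dominating set.

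The main obstacle I expect is the exact control of $\gamma(G_H)$ together with the rigidity of minimum dominating sets: one needs a sequence of observations (in the spirit of Observations~\ref{obs:S}--\ref{obs:bi} above) showing that (a) every dominating set has size at least the target, (b) any dominating set of exactly the target size must pick the pendant gadgets in the prescribed way, hence restricts to something controllable on the $V(H)$-layer and on the selector gadget, and (c) within such a set, the presence of an edge is equivalent to the $H$-side having spent only $\ell$. Step (b)–(c) is delicate because one has to rule out ``mixed'' solutions that cheat by using gadget vertices in unintended combinations; the pendant gadgets and the selector gadget must be designed so their local domination cost is provably minimised only by the intended configuration. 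Once these observations are in place, Claims analogous to Claim~\ref{clm:phisat2} translate satisfiability of the \textsc{Dominating Set} instance into the existence of a minimum dominating set with an edge, and Theorem~\ref{theorem:contracdom}(i) finishes the proof; the $P_t$-free bound follows from the bounded-eccentricity observation and a short case analysis on how long an induced path in $G_H$ can be.
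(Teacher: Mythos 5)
Your proposal is a plan rather than a proof, and it has two genuine gaps. First, the construction is never actually specified: the ``pendant gadgets'' and the ``selector gadget'' are described only by the properties you would like them to have, and you yourself flag the rigidity analysis (steps (b)--(c), ruling out ``mixed'' solutions) as the delicate part and leave it open. That analysis is precisely where the work lies, so nothing checkable has been established. For comparison, the paper's reduction is quite different in shape: from an instance $(G,\ell)$ of \dom{} it builds $\ell+1$ copies $G_0,\dots,G_\ell$ of $G$ glued by true/false twin relations, plus $\ell+1$ apex vertices $x_1,\dots,x_{\ell+1}$ (each $x_i$ complete to $G_0$ and to $G_i$) and a pendant $y$ on $x_{\ell+1}$; one then shows $\gamma(G')=\min\{\gamma(G)+1,\ell+1\}$, and the yes/no dichotomy comes from the tension between the stable dominating set $\{x_1,\dots,x_{\ell+1}\}$ and the non-stable set $\{v^0_{i_1},\dots,v^0_{i_k},x_{\ell+1}\}$ available exactly when $\gamma(G)\le\ell$. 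There is no ``exact formula $|V(H)|+\ell+1$'' and no per-vertex pendant; indeed, pendant gadgets that force every vertex of $H$ (or its copy) into the solution would trivialize the encoding of a size-$\ell$ dominating set, and your sketch does not explain how the budget $\ell$ is actually enforced.

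Second, and more seriously, the argument you give for $P_t$-freeness is based on a false implication: bounded eccentricity (or diameter) does \emph{not} imply $P_t$-freeness. A long path plus a universal vertex has diameter $2$ yet contains arbitrarily long induced paths, so making $V(H)$ a clique or routing everything through a common clique does not bound the length of induced paths. The paper instead obtains $P_9$-freeness by starting the reduction from \dom{} on $2K_2$-free graphs (where it is $\mathsf{NP}$-hard) and proving, via a careful case analysis exploiting the twin structure of the copies $G_0,\dots,G_\ell$, that any induced $P_9$ in $G'$ would force a $2K_2$ in $G$. Some argument of this kind --- tying long induced paths in the constructed graph back to a forbidden structure in the source instance --- is unavoidable, and your proposal contains no substitute for it.
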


\begin{proof}
We reduce from \dom{}: given an instance $(G,\ell)$ of this problem, we construct an equivalent instance $G'$ of $1$-\textsc{Edge Contraction($\gamma$)} as follows. We denote by $\{v_1,\ldots,v_n\}$ the vertex set of $G$. The graph $G'$ consists of $\ell + 1$ copies of $G$, denoted by $G_0, \ldots, G_{\ell}$, connected in such a way that for any $1 \leq i \leq \ell$ and $1 \leq k \leq n$, the copies $v_k^i \in V(G_i)$ and $v_k^0 \in V(G_0)$ of a vertex $v_k$ of $G$ are true twins in the subgraph of $G'$ induced by $V(G_0) \cup V(G_i)$; and for any $1 \leq i,j \leq \ell$ and $1 \leq k \leq n$, the copies $v_k^i \in V(G_i)$ and $v_k^j \in V(G_j)$ of a vertex $v_k$ of $G$ are false twins in the subgraph of $G'$ induced by $\bigcup_{1 \leq p \leq \ell} V(G_p)$. Next, we add $\ell + 1$ pairwise nonadjacent vertices $x_1, \ldots, x_{\ell+1}$, which are made adjacent to every vertex in $G_0$; $x_i$ is further made adjacent to every vertex in $G_i$, for all $1 \leq i \leq \ell$. Finally, we add a vertex $y$ adjacent to only $x_{\ell +1}$ (see Fig. \ref{fig:dom1ec}). Note that the fact that for all $1\leq k \leq n$ and $1 \leq i,j \leq \ell$, $v_k^i$ and $v_k^j$ (resp. $v_k^i$ and $v_k^0$) are false (resp. true) twins within the graph induced by $\bigcup_{1 \leq p \leq \ell} V(G_p)$ (resp. $V(G_0) \cup V(G_i)$) is not made explicit on Fig. \ref{fig:dom1ec} for the sake of readability. In the following, we denote by $X= \{x_1,\ldots,x_{\ell + 1}\}$ and $V = \bigcup_{0 \leq p \leq \ell} V(G_p)$. We now claim the following.

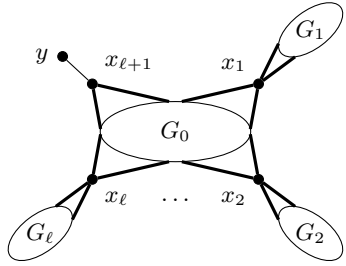
\begin{figure}[htb]
\centering
\begin{tikzpicture}[scale=.5]
\draw (0,0) ellipse (2cm and .8cm);
\node[draw=none] at (0,0) {$G_0$};

\node[circ,label=above left:{\small $x_1$}] (x1) at (2.21,1.27) {};
\draw[rotate=37] (4.5,0) ellipse (1cm and .5cm);
\node[draw=none] at (3.59,2.71) {$G_1$};
\draw[very thick] (x1) -- (1.99,.07)
(x1) -- (.17,.8)
(x1) -- (2.74,2.34)
(x1) -- (3.18,1.99);

\node[circ,label=below left:{\small $x_2$}] (x2) at (2.21,-1.27) {};
\draw[rotate=-37] (4.5,0) ellipse (1cm and .5cm);
\node[draw=none] at (3.59,-2.71) {$G_2$};
\draw[very thick] (x2) -- (1.99,-.07)
(x2) -- (.17,-.8)
(x2) -- (2.74,-2.34)
(x2) -- (3.18,-1.99);

\node[draw=none] at (0,-1.8) {$\ldots$};

\node[circ,label=below right:{\small $x_{\ell}$}] (xl) at (-2.21,-1.27) {};
\draw[rotate=37] (-4.5,0) ellipse (1cm and .5cm);
\node[draw=none] at (-3.59,-2.71) {$G_{\ell}$};
\draw[very thick] (xl) -- (-1.99,-.07)
(xl) -- (-.17,-.8)
(xl) -- (-2.74,-2.34)
(xl) -- (-3.18,-1.99);

\node[circ,label=above right:{\small $x_{\ell + 1}$}] (xl1) at (-2.21,1.27) {};
\node[circ,label=left:{\small $y$}] (y) at (-3,2) {};
\draw[very thick] (xl1) -- (-1.99,.07)
(xl1) -- (-.17,.8);
\draw[-] (xl1) -- (y);
\end{tikzpicture}
\caption{The graph $G'$ (thick lines indicate that the vertex $x_i$ is adjacent to every vertex in $G_0$ and $G_i$, for $i=1,\ldots,\ell + 1$).}
\label{fig:dom1ec}
\end{figure}

\begin{Claim}
\label{claim:gammaG'}
$\gamma (G') = \min \{\gamma (G) + 1, \ell + 1\}$. 
\end{Claim}

It is clear that $\{x_1, \ldots, x_{\ell + 1}\}$ is a dominating set of $G'$; thus, $\gamma (G') \leq \ell +1$. If $\gamma (G) \leq \ell$ and $\{v_{i_1},\ldots, v_{i_k}\}$ is a minimum dominating set of $G$, it is easily seen that $\{v^0_{i_1},\ldots, v^0_{i_k},x_{\ell +1}\}$ is a dominating set of $G'$. Thus, $\gamma (G') \leq \gamma (G) + 1$ and so, $\gamma (G') \leq \min \{\gamma (G) + 1, \ell + 1\}$. Now, suppose to the contrary that $\gamma (G') < \min \{\gamma (G) + 1, \ell + 1\}$ and consider a minimum dominating $D'$ set of $G'$. We first make the following simple observation.

\begin{observation}
\label{obs:yxl+1}
For any minimum dominating set $D$ of $G'$, $D \cap \{y,x_{\ell + 1 }\} \neq~\emptyset$.
\end{observation}

Now, since $\gamma (G') < \ell + 1$, there exists $1 \leq i \leq \ell$ such that $x_i \not\in D'$ (otherwise, $\{x_1,\ldots,x_{\ell}\} \subset D'$ and combined with Observation \ref{obs:yxl+1}, $D'$ would be of size at least $\ell + 1$). But then, $D'' = D' \cap V$ must dominate every vertex in $G_i$, and so $\vert D'' \vert \geq \gamma(G)$. Since $\vert D'' \vert \leq \vert D' \vert - 1$ (recall that $D' \cap \{y,x_{\ell +1}\} \neq \emptyset$), we then have $\gamma (G) \leq \vert D' \vert - 1$, a contradiction. Thus, $\gamma (G') = \min \{\gamma (G) + 1, \ell + 1\}$.

We now show that $(G,\ell)$ is a \yes-instance for \dom{} if and only if $G'$ is a \yes-instance for \contracd{}.

First assume that $\gamma (G) \leq \ell$. Then, $\gamma (G') = \gamma(G) + 1$ by the previous claim, and if $\{v_{i_1},\ldots,v_{i_k}\}$ is a minimum dominating set of $G$, then $\{v^0_{i_1},\ldots,v^0_{i_k},x_{\ell + 1}\}$ is a minimum dominating set of $G'$ which is not stable. Hence, by Theorem \ref{theorem:contracdom}(i), $G'$ is a \yes-instance for \contracd{}.

Conversely, assume that $G'$ is a \yes-instance for \contracd{} i.e., there exists a minimum dominating set $D'$ of $G'$ which is not stable (see Theorem \ref{theorem:contracdom}(i)). Then, Observation \ref{obs:yxl+1} implies that there exists $1 \leq i \leq \ell$ such that $x_i \not\in D'$; indeed, if it weren't the case, then by Claim \ref{claim:gammaG'} we would have $\gamma (G) = \ell + 1$ and thus, $D'$ would consist of $x_1, \ldots, x_{\ell}$ and either $y$ or $x_{\ell +1}$. In both cases, $D'$ would be stable, a contradiction. It follows that $D'' = D' \cap V$ must dominate every vertex in $G_i$ and thus, $\vert D'' \vert \geq \gamma (G)$. But $\vert D'' \vert \leq \vert D' \vert -1$ (recall that $D' \cap \{y,x_{\ell + 1}\} \neq \emptyset$) and so by Claim \ref{claim:gammaG'}, $\gamma (G) \leq \vert D' \vert - 1 \leq (\ell + 1) - 1$ that is, $(G,\ell)$ is a \yes-instance for \dom{}.

Finally, we can prove that if $G$ is $2K_2$-free then $G'$ is $P_9$-free. However, due to lack of space, this proof has been placed in Section \ref{section:P9} of the Appendix.
\qed
\end{proof}

Given the $\mathsf{NP}$-hardness of \contracd{} and its close relation to \dom{}, it is natural to consider the complexity of the problem when parameterized by the size of a minimum dominating set of the input graph. In the following, we denote by $mimw$ the mim-width parameter, and show that \contracd{} is $\mathsf{W}[1]$-hard when parameterized by $\gamma + mimw$. We first state two simple facts regarding the mim-width parameter.

\begin{observation}\label{obs:mimtwins}
Let $G$ be a graph and $u,v\in V(G)$ be two vertices that are true (resp. false) twins in $G$. Then $mimw(G-v)=mimw(G)$.
\end{observation}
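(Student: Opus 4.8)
The plan is to prove the two inequalities $mimw(G-v)\le mimw(G)$ and $mimw(G)\le mimw(G-v)$ separately, by transforming branch decompositions. Recall that a branch decomposition of a graph $H$ is a pair $(T,\delta)$, with $T$ a subcubic tree and $\delta$ a bijection from $V(H)$ to the leaves of $T$; deleting an edge $e$ of $T$ partitions $V(H)$ into $(A_e,\overline{A_e})$ according to the two components of $T-e$, the value of this partition is the maximum size of an induced matching in the bipartite graph on $A_e\cup\overline{A_e}$ consisting of the edges of $H$ with one endpoint in each part, and $mimw(H)$ is the minimum over all branch decompositions of the maximum value taken over the edges of $T$. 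The only property of twins I will use is that $uw\in E(G)\iff vw\in E(G)$ for every $w\in V(G)\setminus\{u,v\}$, which holds both when $u,v$ are true twins and when they are false twins; observe also that $u$ and $v$ will always lie on the same side of every partition considered, so the edge $uv$ (present in the true twin case) never crosses a cut and is irrelevant. I will assume $\vert V(G)\vert\ge 3$, so that $G-v$, which is connected, contains an edge and hence $mimw(G-v)\ge 1$; the graphs with fewer than three vertices are dealt with directly.

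For $mimw(G-v)\le mimw(G)$, I would take an optimal branch decomposition of $G$, delete the leaf $\delta(v)$, and suppress the resulting degree-two node. Every partition of $V(G-v)$ produced this way is obtained from a partition $(A_e,\overline{A_e})$ of $V(G)$ by removing $v$, and its bipartite graph is an induced subgraph of that of $(A_e,\overline{A_e})$, hence has no larger maximum induced matching; so the width does not increase. This direction does not use the twin hypothesis.

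For $mimw(G)\le mimw(G-v)$, I would take an optimal branch decomposition $(T',\delta')$ of $G-v$, subdivide the edge of $T'$ incident to the leaf $\delta'(u)$ with a new node $z$, and attach a new leaf for $v$ to $z$; thus $u$ and $v$ become siblings at $z$. It then suffices to bound the value of each edge of the new tree $T$ by $mimw(G-v)$. The two edges at $z$ towards $u$ and towards $v$ give the partitions $\{u\}$ versus the rest and $\{v\}$ versus the rest, of value at most $1$; the third edge at $z$ gives $\{u,v\}$ versus the rest, whose value is at most $1$ as well, since every crossing edge is incident to $u$ or $v$ and, by the twin property, no induced matching can contain two crossing edges $ua$ and $vb$ with $a\neq b$ (then $u$ would be adjacent to $b$) nor $ua$ and $va$; all three are at most $mimw(G-v)\ge 1$. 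Every remaining edge of $T$ comes from an edge of $T'$ giving a partition $(A',\overline{A'})$ of $V(G-v)$ with, say, $u\in A'$, and in $T$ it gives $(A'\cup\{v\},\overline{A'})$; the claim to prove is that this partition has value at most that of $(A',\overline{A'})$ in $G-v$. For this I would take a maximum induced matching $M$ of the bipartite graph of $(A'\cup\{v\},\overline{A'})$: if $M$ avoids $v$ it is already an induced matching for $(A',\overline{A'})$ in $G-v$; otherwise $M$ contains a unique edge $vb$, in which case $u$ is not covered by $M$ (matching $u$ to some $a$ would, since $u$ and $v$ are twins, make $u$ adjacent to $b$ and violate inducedness), so one may replace $vb$ by $ub$, obtaining, again via $uw\in E(G)\iff vw\in E(G)$, an induced matching of the same size for $(A',\overline{A'})$ in $G-v$.

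The only step I expect to require care is this last rerouting of a maximum induced matching through $v$ into one through $u$: one must check that the swap creates no repeated vertex (using that $M$ being induced forces $u$ to be unmatched whenever $v$ is matched) and introduces no chord between two distinct matching edges (using the twin equivalence to carry non-adjacencies from $v$ over to $u$). All the rest is routine bookkeeping with branch decompositions, and combining the two inequalities yields $mimw(G-v)=mimw(G)$.
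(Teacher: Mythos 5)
The paper offers no proof of this observation at all---it is stated as a known fact about mim-width and is not among the proofs deferred to the appendix---so the only comparison possible is with the standard folklore argument, which is essentially what you reconstruct. Your proof is correct. The direction $mimw(G-v)\le mimw(G)$ by deleting the leaf of $v$ and suppressing the degree-two node is immediate, and for the converse the two points that actually need the twin hypothesis are both handled properly: (i) the new cut $(\{u,v\},V(G)\setminus\{u,v\})$ has mim value at most $1$, since a second crossing edge $vb$ alongside $ua$ would force the chord $ub$; and (ii) for every other cut, an induced matching using an edge $vb$ forces $u$ to be unmatched and can be rerouted to $ub$, with all non-adjacencies at $v$ transferring to $u$. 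The one point deserving more care than you give it is the degenerate case: your argument needs $mimw(G-v)\ge 1$ to absorb the three new cuts of value $1$, which holds whenever $G-v$ contains an edge, i.e.\ whenever $G$ is connected with $|V(G)|\ge 3$ (note $v$ cannot be a cut vertex, as $u$ dominates its neighbourhood). For $G=K_2$ the statement is actually false under the usual convention ($mimw(K_2)=1$ while $mimw(K_1)=0$), so ``dealt with directly'' should really read ``excluded''; this is harmless in the paper, where the observation is only applied to large graphs obtained by repeatedly adding twins.
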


\begin{observation}\label{obs:mimaddv}
Let $G$ be a graph and $v\in V(G)$. Then $mimw(G)\leq mimw(G-v)+1$.
\end{observation}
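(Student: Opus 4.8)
The plan is to take an optimal branch decomposition of $G-v$ and graft a leaf for $v$ onto it, then check that this raises the mim-value of every cut by at most one. Recall that a branch decomposition of a graph $H$ is a pair $(T,\delta)$ where $T$ is a subcubic tree and $\delta$ a bijection from $V(H)$ to the leaves of $T$; deleting an edge $e$ of $T$ splits the leaves, and hence $V(H)$, into a pair $(A_e,\overline{A_e})$, the \emph{mim-value} of $e$ is the size of a maximum induced matching in the bipartite graph on parts $A_e$ and $\overline{A_e}$ with edge set $\{ab\in E(H): a\in A_e,\ b\in \overline{A_e}\}$, the width of $(T,\delta)$ is the maximum mim-value over all edges of $T$, and $mimw(H)$ is the minimum width over all branch decompositions of $H$.

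First I would dispose of the trivial cases $|V(G)|\leq 2$, which hold by the usual conventions for small graphs, so assume $G-v$ has at least two vertices and fix a branch decomposition $(T',\delta')$ of $G-v$ of width $mimw(G-v)$; then $T'$ has at least one edge. I would pick an arbitrary edge $f$ of $T'$, subdivide it with a new internal node $w$, attach a new pendant leaf $\ell_v$ to $w$, and extend $\delta'$ to $\delta$ with $\delta(v)=\ell_v$. The resulting tree $T$ is still subcubic (the degree of $w$ is exactly three), so $(T,\delta)$ is a branch decomposition of $G$ and it remains to bound its width. The pendant edge $w\ell_v$ induces the cut $(\{v\},V(G)\setminus\{v\})$, whose bipartite graph is a star and thus has mim-value at most $1\leq mimw(G-v)+1$. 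Every other edge $e$ of $T$ ``comes from'' an edge of $T'$ (the two halves of $f$ both come from $f$), and a short argument on the structure of $T$ shows that the cut of $G$ induced by $e$ is obtained from the cut of $G-v$ induced by the corresponding edge of $T'$ by inserting $v$ into exactly one of its two sides.

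The one point that actually needs an argument is the elementary fact that adding a single new vertex to one side of a bipartite graph increases the size of a maximum induced matching by at most one: if $M$ is an induced matching of the augmented graph, then $M$ contains at most one edge incident to the new vertex, and deleting that vertex altogether leaves an induced matching of size at least $|M|-1$ in the original bipartite graph. Applying this to each edge $e\neq w\ell_v$ of $T$ shows that its mim-value is at most $mimw(G-v)+1$, so the width of $(T,\delta)$ is at most $mimw(G-v)+1$, and hence $mimw(G)\leq mimw(G-v)+1$. I do not expect a real obstacle here; the only steps requiring (routine) care are checking that the old cuts genuinely just gain $v$ on one side after the subdivision, and the induced-matching fact above.
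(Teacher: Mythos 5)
Your proof is correct, and the paper in fact states this observation without any proof at all (it is presented as one of ``two simple facts'' about mim-width), so your argument supplies exactly the routine justification the authors omit. The construction --- subdivide an edge of an optimal branch decomposition of $G-v$, hang a new leaf for $v$ off the subdivision node, observe that the pendant cut has mim-value at most $1$ and that every other cut is the old cut with $v$ inserted on one side, and then use the fact that adding one vertex to a side of a bipartite graph raises the maximum induced matching by at most one --- is the standard and intended argument, and your handling of the small cases and of the two half-edges of the subdivided edge is sound.
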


\begin{theorem}\label{thm:w1contracd}
\contracd{} is $\mathsf{W}[1]$-hard parameterized by $\gamma + mimw$.
\end{theorem}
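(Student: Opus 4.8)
The plan is to reuse the graph $G'$ built in the proof of Theorem~\ref{theorem:1econtracd}, but to feed it an instance $(G,\ell)$ of \dom{} that is hard under the parameterization by $\ell + mimw(G)$. It is known that \dom{} is $\mathsf{W}[1]$-hard parameterized by the solution size plus the mim-width of the input graph; indeed, the standard reductions establishing the $\mathsf{W}[1]$-hardness of \dom{} parameterized by mim-width produce graphs whose domination number is also linear in the mim-width. Recall from the proof of Theorem~\ref{theorem:1econtracd} that $(G,\ell)$ is a \yes-instance of \dom{} if and only if $G'$ is a \yes-instance of \contracd{}, and that $G'$ can be built in time polynomial in $|V(G)|$ provided $\ell < |V(G)|$ (which we may assume, as otherwise $(G,\ell)$ is trivially a \yes-instance). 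Thus it remains to check that $\gamma(G') + mimw(G')$ is bounded by a function of $\ell + mimw(G)$. Since $X=\{x_1,\dots,x_{\ell+1}\}$ dominates $G'$ (equivalently, by Claim~\ref{claim:gammaG'}), we already have $\gamma(G')\le\ell+1$, so the crux is to bound $mimw(G')$.

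To bound $mimw(G')$, I would first note that $G'$ is obtained from $G'[V]$ by adding the $\ell+2$ vertices $x_1,\dots,x_{\ell+1},y$, so $\ell+2$ applications of Observation~\ref{obs:mimaddv} give $mimw(G')\le mimw(G'[V])+\ell+2$. It then suffices to prove $mimw(G'[V])=mimw(G)$, and the main work here is to make the twin structure of $G'[V]$ explicit. Unravelling the definitions, for $k\neq l$ one has $v_k^p v_l^q\in E(G')$ if and only if $v_k v_l\in E(G)$, \emph{independently of} $p$ and $q$ (this uses both the true-twin relation between $G_0$ and each $G_i$ and the false-twin relation among $G_1,\dots,G_\ell$), while for a fixed $v_k$ one has $v_k^0 v_k^i\in E(G')$ for all $1\le i\le\ell$ and $v_k^i v_k^j\notin E(G')$ for $1\le i<j\le\ell$. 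Consequently $v_k^1,\dots,v_k^\ell$ are pairwise false twins in $G'[V]$ for each $k$, so by Observation~\ref{obs:mimtwins} (one deletion at a time, each deleted vertex still having a twin) we may delete $v_k^2,\dots,v_k^\ell$ for every $k$ without changing the mim-width; in the resulting graph $H$, the two surviving copies $v_k^0,v_k^1$ of each $v_k$ satisfy $N_H[v_k^0]=N_H[v_k^1]$ and are thus true twins, so a further round of deletions via Observation~\ref{obs:mimtwins} removes $v_k^1$ for all $k$ and leaves exactly the copy $G_0\cong G$. Hence $mimw(G'[V])=mimw(H)=mimw(G)$, and therefore $mimw(G')\le mimw(G)+\ell+2$.

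Combining, $\gamma(G')+mimw(G')\le (\ell+1)+(mimw(G)+\ell+2)=2\bigl(\ell+mimw(G)\bigr)+3$, so the map $(G,\ell)\mapsto G'$ is a polynomial-time parameterized reduction from \dom{} (parameterized by $\ell+mimw$) to \contracd{} (parameterized by $\gamma+mimw$); if one insists on also providing a branch decomposition of $G'$, the constructive versions of Observations~\ref{obs:mimtwins} and~\ref{obs:mimaddv} yield one of width $mimw(G)+\ell+2$ from a given decomposition of $G$. The $\mathsf{W}[1]$-hardness of \contracd{} parameterized by $\gamma+mimw$ follows. I expect the main obstacle to be the careful verification of the adjacency/twin structure of $G'[V]$ claimed above, since everything downstream depends on it; a secondary point is simply to locate and cite the precise statement of the $\mathsf{W}[1]$-hardness of \dom{} parameterized by $\ell+mimw$ that serves as the starting point.
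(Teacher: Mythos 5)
Your proposal is correct and follows essentially the same route as the paper: a parameterized reduction from \dom{} parameterized by solution size plus mim-width (the Fomin--Golovach--Raymond result), reusing the graph $G'$ from Theorem~\ref{theorem:1econtracd} and bounding $\gamma(G')+mimw(G')$ via Observations~\ref{obs:mimtwins} and~\ref{obs:mimaddv}; the paper phrases the twin argument as additions to $G$ rather than your deletions from $G'[V]$, but this is the same calculation. (Your final "$=2(\ell+mimw(G))+3$" should be "$=2\ell+mimw(G)+3$", a harmless slip since the stated quantity still upper-bounds it.)
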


\begin{proof}
We give a parameterized reduction from \dom{} parameterized by solution size plus mim-width, which is a problem that was recently shown to be $\mathsf{W}[1]$-hard by Fomin et al.~\cite{FOMIN18}. Given an instance $(G,\ell)$ of \dom{}, the construction of the equivalent instance $G'$ for \contracd{} is the same as the one introduced in the proof of Theorem~\ref{theorem:1econtracd}; and it is there shown that $G$ is a \yes-instance for \dom{} if and only if $G'$ is a \yes-instance for \contracd{}. Now, note that $G'$ can be obtained from $G$ by the addition of true twins (the set $V(G_1)$), the addition of false twins (the sets $V(G_2),\ldots,V(G_\ell)$), and the addition of $\ell+2$ vertices ($x_1,\ldots,x_{l+1},y$). By Observation~\ref{obs:mimtwins}, the addition of true (resp. false) twins does not increase the mim-width of a graph and, by Observation~\ref{obs:mimaddv}, the addition of a vertex can only increase the mim-width of $G$ by one; thus, $mimw(G')\leq mimw(G)+\ell+2$ and since $\gamma(G')\leq \ell+1$ by Claim~\ref{claim:gammaG'}, we conclude that $mimw(G')+\gamma(G')\leq mimw(G)+2\ell+3$.
\qed
\end{proof}

In order to obtain complexity results for further graph classes, let us now consider subdivisions of edges. 

\begin{lemma}[\app]
\label{lemma:3sub}
Let $G$ be a graph and let $G'$ be the graph obtained by 3-subdividing every edge of $G$. Then $G$ is a \yes-instance for \contracd{} if and only if $G'$ is a \yes-instance for \contracd{}.
\end{lemma}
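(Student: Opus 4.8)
The plan is to relate dominating sets of $G$ and of its $3$-subdivision $G'$ via Theorem~\ref{theorem:contracdom}(i), i.e., to show that $G$ has a non-stable minimum dominating set if and only if $G'$ does. The first step is to pin down $\gamma(G')$ in terms of $\gamma(G)$. Write $e=uv$ for an edge of $G$ and let $u$-$a_{e}$-$b_{e}$-$c_{e}$-$v$ be the path replacing it in $G'$. I would argue that a minimum dominating set of $G'$ can be taken to consist of the original vertices $V(G)$ contributing a dominating set $D$ of $G$, plus exactly one of the three subdivision vertices $\{a_e,b_e,c_e\}$ for each edge $e$ — indeed the middle vertex $b_e$ is never dominated by any original vertex, so at least one subdivision vertex per edge is forced, and one suffices (picking $b_e$ dominates $a_e,b_e,c_e$ while $u,v$ take care of themselves). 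Conversely, any dominating set of $G'$ can be pushed into this normal form without increasing its size: if it uses $a_e$ or $c_e$ we may swap to $b_e$, and if it uses two of the three path-internal vertices we can drop one after possibly adding $u$ or $v$ to $D$. This yields $\gamma(G') = \gamma(G) + |E(G)|$, and moreover a canonical form for minimum dominating sets of $G'$: a minimum dominating set $D$ of $G$ together with, for each edge $e$, the vertex $b_e$.

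The second step is the equivalence of the non-stability conditions. If $D$ is a non-stable minimum dominating set of $G$, pick an edge $uv$ with $u,v\in D$; in $G'$ take $D' = D \cup \{b_f : f \in E(G)\}$, which is a minimum dominating set of $G'$ by the count above, and it contains the adjacent pair $u,a_{uv}$ — wait, $u$ and $b_{uv}$ are at distance two, so instead I should be slightly more careful about where the edge inside $D'$ comes from. The cleaner choice: for the edge $e=uv$ with both endpoints in $D$, include $a_e$ rather than $b_e$ in $D'$; then $u$ still dominates $a_e$ and $b_e$, and $v$ (being in $D$) together with $c_e$ — hmm, $b_e$ is dominated by $a_e$, $c_e$ is dominated by $v$. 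So $D' = D \cup \{a_{uv}\} \cup \{b_f : f \neq uv\}$ is a dominating set of $G'$ of size $\gamma(G) + |E(G)| = \gamma(G')$, and it is not stable since $u$ and $a_{uv}$ are adjacent. Hence $G'$ is a \yes-instance. For the converse, suppose $G'$ has a non-stable minimum dominating set $D'$; put it in the canonical normal form (showing the normalization can be done while preserving non-stability is the delicate point — see below), so $D' = D \cup \{b_f : f\in E(G)\}$ with $D$ a minimum dominating set of $G$. Since the $b_f$'s are pairwise non-adjacent and each $b_f$ is non-adjacent to every vertex of $V(G)$, the only possible edge inside $D'$ lies inside $D$; thus $D$ is a non-stable minimum dominating set of $G$ and $G$ is a \yes-instance.

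The main obstacle is the converse direction's normalization: given an \emph{arbitrary} non-stable minimum dominating set $D'$ of $G'$, I must reach the canonical form without destroying non-stability. If the witnessing edge of $D'$ is an edge of $G$, the swaps above (replacing stray $a_e,c_e$ by $b_e$, discarding duplicates) don't touch it and we are fine. The problematic case is when the witnessing edge is incident to a subdivision vertex, e.g.\ $u,a_e\in D'$; then I want to instead exhibit a \emph{different} non-stable minimum dominating set of $G'$ whose edge lies in $G$. Here I would use the fact that $|D'\cap\{a_e,b_e,c_e\}|\ge 1$ with equality forced by minimality for all but possibly a few edges, trace through which original endpoints of $e$ lie in $D'$, and case-analyse: if $u,a_e \in D'$ then $u \in D' \cap V(G)$, and I either find the edge already inside $D'\cap V(G)$, or I reroute — replacing $a_e$ by $b_e$ and noting $u$ must then have a neighbor in $D'$ among $V(G)$ (as $u$ is dominated and its only non-$V(G)$ neighbor $a_e$ is now gone), producing an edge inside $D' \cap V(G)$. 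I expect this to go through with a short but somewhat fiddly case distinction on edges $e$ and on which of $u,v,a_e,b_e,c_e$ lie in $D'$; it is entirely elementary, just bookkeeping, so I would present it compactly rather than exhaustively.
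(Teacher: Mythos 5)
Your high-level strategy (establish $\gamma(G')=\gamma(G)+|E(G)|$, then transfer a non-stable minimum dominating set in each direction via Theorem~\ref{theorem:contracdom}(i)) is the same as the paper's, but your ``canonical form'' is wrong, and the error propagates through both directions. If $D$ is a dominating set of $G$ and $uv\in E(G)$ with $u\notin D$ and $v$ the only neighbour of $u$ in $D$, then in $G'$ the vertices $u$ and $v$ are no longer adjacent, so $D\cup\{b_e : e\in E(G)\}$ does \emph{not} dominate $u$: the assertion that ``$u,v$ take care of themselves'' fails. Concretely, for $G=K_2$ with $D=\{u\}$, the set $\{u,b_e\}$ leaves $v$ undominated in $G'=P_5$. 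The same defect invalidates your forward construction $D\cup\{a_{uv}\}\cup\{b_f : f\neq uv\}$ (take $G=P_4$ with $D$ the two middle vertices: both endpoints of $G$ are then undominated in $G'$), and it invalidates the normalisation ``swap $a_e$ or $c_e$ to $b_e$'' on which both your lower bound for $\gamma(G')$ and your converse rest, since $a_e$ may be the only vertex of $D'$ dominating $u$. The choice of which subdivision vertex to keep for an edge $e=uv$ must depend on $D\cap\{u,v\}$ (the middle vertex if neither endpoint is in $D$, the vertex next to $v$ if only $u\in D$, the vertex next to $u$ if both are), which is exactly what the paper's Claim~\ref{claim:3sub} does; dually, when projecting a dominating set of $G'$ back to $G$, a vertex $a_e$ adjacent to $u$ must be replaced by the \emph{opposite} endpoint $v$, not merely discarded (Observation~\ref{obs:e1v}).

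A second, independent error occurs in your converse rerouting: after replacing $a_e$ by $b_e$ you argue that ``$u$ must then have a neighbour in $D'$ among $V(G)$''. In $G'$ every edge of $G$ has been subdivided, so $u$ has \emph{no} neighbours in $V(G)$ at all; its entire neighbourhood consists of subdivision vertices, and moreover $u\in D'$ dominates itself, so no such conclusion is available. This step therefore cannot produce an edge inside $D'\cap V(G)$. The paper instead handles a witnessing edge of the form $ue_1$ or $e_1e_2$ by projecting $e_1$ to the opposite endpoint $v$ and then using minimality of $D'$ to locate a second original endpoint in the projected set, yielding an edge of $G$ inside it. Your plan is repairable along these lines, but as written the key constructions do not produce dominating sets.
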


By 3-subdividing every edge of a graph $G$ sufficiently many times, we deduce the following two corollaries from Lemma \ref{lemma:3sub}.

\begin{corollary}
\contracd{} is $\mathsf{NP}$-hard when restricted to bipartite graphs.
\end{corollary}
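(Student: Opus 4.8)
The plan is to combine Lemma~\ref{lemma:3sub} with Theorem~\ref{theorem:1econtracd}: since \contracd{} is $\mathsf{NP}$-hard on general connected graphs, it suffices to give a polynomial-time reduction from arbitrary instances to \emph{bipartite} instances, and $3$-subdividing every edge does exactly this. Concretely, given a connected graph $G$, let $G'$ be obtained from $G$ by $3$-subdividing every edge. By Lemma~\ref{lemma:3sub}, $G$ is a \yes-instance for \contracd{} if and only if $G'$ is, and $G'$ is clearly computable from $G$ in polynomial time (it is connected and has $|V(G)|+3|E(G)|$ vertices). Hence it only remains to check that $G'$ is bipartite.

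To see this, recall that $3$-subdividing an edge $uv$ replaces it by a path $u$-$v_1$-$v_2$-$v_3$-$v$ on four edges, whose internal vertices $v_1,v_2,v_3$ have degree $2$ in $G'$ and lie on no other such path. Consequently, any cycle of $G'$ entering one of these paths must traverse it entirely, so every cycle of $G'$ is a concatenation of subdivided-edge paths, each contributing exactly four edges; its length is therefore a multiple of $4$, in particular even. Thus $G'$ has no odd cycle and is bipartite. (In fact a single application of the $3$-subdivision already suffices for bipartiteness, since it multiplies the length of every cycle by $4$; iterating the operation — justified by repeatedly applying Lemma~\ref{lemma:3sub} — also keeps the graph bipartite, and this iterated version is what is needed for the subsequent corollary on $\{C_3,\ldots,C_l\}$-free graphs, where one chooses the number of rounds so that the girth exceeds $l$.)

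Putting these together gives a polynomial-time reduction from \contracd{} on connected graphs, which is $\mathsf{NP}$-hard by Theorem~\ref{theorem:1econtracd}, to \contracd{} restricted to connected bipartite graphs, establishing the corollary. The only non-routine ingredient is Lemma~\ref{lemma:3sub} itself, which we may assume; the remaining steps — the degree argument showing bipartiteness and the polynomiality of the construction — are immediate, so I do not expect any genuine obstacle.
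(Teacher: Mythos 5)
Your proposal is correct and follows exactly the paper's intended argument: apply Lemma~\ref{lemma:3sub} to the $\mathsf{NP}$-hard instances from Theorem~\ref{theorem:1econtracd}, observing that $3$-subdividing every edge multiplies all cycle lengths by $4$ and hence yields a bipartite graph. The degree-two argument for why every cycle of $G'$ decomposes into whole subdivided paths is the right justification, and the remark about iterating the subdivision for the girth corollary matches the paper's ``sufficiently many times'' phrasing.
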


\begin{corollary} \label{theorem:largegirth}
For any $\ell\geq 3$, \contracd{} is $\mathsf{NP}$-hard when restricted to $\{C_3,\ldots,C_\ell\}$-free graphs.
\end{corollary}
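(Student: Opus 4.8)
The plan is to reduce from \contracd{} on general connected graphs, which is $\mathsf{NP}$-hard by Theorem~\ref{theorem:1econtracd}, and to use Lemma~\ref{lemma:3sub} to iteratively 3-subdivide all edges until every short cycle has been destroyed. For a graph $H$, write $\sigma(H)$ for the graph obtained from $H$ by 3-subdividing every edge, and let $\sigma^t$ denote the $t$-fold iteration of $\sigma$. Given a connected instance $G$ of \contracd{}, choose $t$ to be any integer with $4^t\geq \ell$ (for instance $t=\lceil\log_4\ell\rceil$) and output $G'=\sigma^t(G)$. Since $\ell$ is a fixed constant, so is $t$; and since one application of $\sigma$ turns an $m$-edge, $n$-vertex graph into a $4m$-edge, $(n+3m)$-vertex graph, the graph $G'$ has size polynomial in $|V(G)|+|E(G)|$ and is computable in polynomial time. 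Moreover $\sigma$ clearly preserves connectivity, so $G'$ is connected.

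For correctness, Lemma~\ref{lemma:3sub} states that a graph $H$ is a \yes-instance for \contracd{} if and only if $\sigma(H)$ is; applying this along the chain $G,\sigma(G),\sigma^2(G),\dots,\sigma^t(G)=G'$ yields that $G$ is a \yes-instance for \contracd{} if and only if $G'$ is. It remains to argue that $G'$ is $\{C_3,\dots,C_\ell\}$-free. Since every subdivision vertex has degree $2$, any cycle of $\sigma(H)$ must traverse each subdivided-edge path (of length $4$) entirely, so the cycles of $\sigma(H)$ are in bijection with those of $H$, a cycle of length $k$ in $H$ corresponding to one of length $4k$ in $\sigma(H)$. Consequently the girth of $G'=\sigma^t(G)$ is exactly $4^t$ times the girth of $G$ (and infinite if $G$ is a forest); in either case it is at least $3\cdot 4^t\geq 3\ell>\ell$ by the choice of $t$, so $G'$ has no cycle of length at most $\ell$. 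This completes the reduction and shows that \contracd{} is $\mathsf{NP}$-hard on $\{C_3,\dots,C_\ell\}$-free graphs.

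The argument presents no real obstacle: essentially all of the work is carried by Lemma~\ref{lemma:3sub}, and the only points that require care are the girth-multiplication observation above (which ensures a constant number of subdivision rounds suffices to kill all cycles of length up to $\ell$) and the observation that, because $\ell$ and hence $t$ are fixed, the $t$-fold blow-up remains polynomial so that the reduction runs in polynomial time.
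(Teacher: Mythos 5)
Your proposal is correct and follows essentially the same route as the paper, which likewise obtains this corollary by applying Lemma~\ref{lemma:3sub} iteratively ("by 3-subdividing every edge of a graph sufficiently many times"). Your write-up just makes explicit the details the paper leaves implicit: the girth-multiplication argument showing a constant number of rounds suffices, and the observation that the blow-up stays polynomial because $\ell$, and hence the number of rounds, is fixed.
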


We finally observe that, even if an edge is given, deciding whether contracting this particular edge decreases the domination number is unlikely to be solvable in polynomial time as shown in the following result.

\begin{theorem}[\app]
\label{thm:nop}
There exists no polynomial-time algorithm deciding whether contracting a given edge decreases the domination number, unless $\mathsf{P}=\mathsf{NP}$.
\end{theorem}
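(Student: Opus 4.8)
The plan is to reduce from the decision problem \dom{}, or rather from the gap-free variant used in the proof of Theorem~\ref{theorem:1econtracd}, by exploiting the gadget $G'$ already constructed there. Recall that for $(G,\ell)$, the graph $G'$ built in that proof satisfies $\gamma(G')=\min\{\gamma(G)+1,\ell+1\}$, and $G'$ is a \yes-instance of \contracd{} if and only if $\gamma(G)\leq \ell$. The key additional observation I would make is that in $G'$ the edge $x_{\ell+1}y$ is, in a sense, the ``canonical'' edge whose contraction can reduce the domination number: by Theorem~\ref{theorem:contracdom}(i), $G'$ is a \yes-instance iff it has a non-stable minimum dominating set, and Observation~\ref{obs:yxl+1} forces every minimum dominating set to contain $y$ or $x_{\ell+1}$. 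So first I would argue that, when $\gamma(G)\leq\ell$, one can always choose a minimum dominating set of $G'$ that contains $x_{\ell+1}$ together with a neighbour of $x_{\ell+1}$ (e.g.\ a vertex $v_{i_1}^0$ of $G_0$), so that contracting an edge incident to $x_{\ell+1}$ inside such a set does the job — and conversely, when $\gamma(G)>\ell$, no edge contraction at all helps.

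The main step is then to \emph{pin down a single, explicitly named edge} $e$ of $G'$ such that contracting $e$ decreases $\gamma(G')$ precisely when $\gamma(G)\leq\ell$. A natural candidate is the edge $x_{\ell+1}v_1^0$ (or, to be safe, a freshly added pendant structure). Concretely, I would modify the construction slightly: attach to $x_{\ell+1}$ a new pendant vertex $z$ via an edge $e=x_{\ell+1}z$, and also attach $z$ to $y$ — or more cleanly, add a new vertex $z$ adjacent to both $y$ and $x_{\ell+1}$ and nothing else (forming a triangle $x_{\ell+1}yz$). Contracting $e=x_{\ell+1}z$ merges these two vertices; I would then recompute $\gamma$ of the contracted graph and of the original, and verify that $\gamma$ drops iff $\gamma(G)\le\ell$. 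The point of the extra vertex is to make the ``distinguished'' edge structurally identifiable (it is the unique edge both of whose endpoints lie in a triangle with $y$, say) and to ensure that no \emph{other} edge of the graph could achieve the reduction — i.e.\ the instance is a \yes-instance for \contracd{} iff it is a \yes-instance for contracting this particular~$e$. This last uniqueness claim should follow from a case analysis on which minimum dominating sets of $G'$ are non-stable, using Observations~\ref{obs:yxl+1} and the true/false-twin structure of the copies $G_0,\dots,G_\ell$.

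With that in hand, the theorem is immediate: a polynomial-time algorithm deciding whether contracting the given edge $e$ decreases $\gamma$ would decide \dom{} in polynomial time, contradicting $\mathsf{P}\neq\mathsf{NP}$. Note the theorem statement is purposely about ``a given edge'', so I do not even need the uniqueness part in full strength — it suffices that there is \emph{one} edge whose contraction is equivalent to the original decision problem; I can simply hand the verifier that edge. So the reduction is: from $(G,\ell)$ produce $(G',e)$ in polynomial time, where $G'$ is the gadget above and $e=x_{\ell+1}z$; then $\gamma(G'\backslash e)\le\gamma(G')-1$ iff $\gamma(G)\le\ell$.

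The hard part will be the bookkeeping in the second step: verifying that contracting exactly $e=x_{\ell+1}z$ (as opposed to an arbitrary edge incident to a minimum dominating set) is enough to realize the reduction, and that the domination number of $G'\backslash e$ behaves as claimed — in particular that contracting $e$ does \emph{not} accidentally reduce $\gamma$ when $\gamma(G)>\ell$. This amounts to re-deriving, for the contracted graph, an analogue of Claim~\ref{claim:gammaG'} together with an analogue of Observation~\ref{obs:yxl+1}; both are routine but require care because after contraction the merged vertex $x_{\ell+1}$ inherits the neighbourhood of $z$, so the pendant $y$ is no longer pendant and the forcing argument of Observation~\ref{obs:yxl+1} must be re-examined. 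I expect the cleanest route is to choose the pendant gadget so that after contraction $y$ becomes adjacent only to the merged vertex, preserving exactly the structure needed for the original $\gamma$-count, so that $\gamma(G'\backslash e)=\min\{\gamma(G)+1,\ell+1\}$ still holds while the merged vertex now lies in a non-stable minimum dominating set precisely when $\gamma(G)\le\ell$.
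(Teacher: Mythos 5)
Your overall strategy (a many--one reduction that pins down one distinguished edge in the gadget $G'$ of Theorem~\ref{theorem:1econtracd}) is not the route the paper takes, and as written it contains a fatal gap. The ``clean'' gadget you settle on --- add a new vertex $z$ adjacent to exactly $y$ and $x_{\ell+1}$ and contract $e=x_{\ell+1}z$ --- provably cannot work: since $N(z)=\{y,x_{\ell+1}\}\subseteq N[x_{\ell+1}]$, contracting $e$ yields a graph isomorphic to $G'$ itself, and since $z$ is a true twin of $y$ the augmented graph has the same domination number as $G'$; hence this contraction \emph{never} decreases $\gamma$, regardless of whether $\gamma(G)\leq\ell$. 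Your first candidate $e=x_{\ell+1}v_1^0$ fails as well: one can check that contracting it decreases $\gamma(G')$ only if the specific vertex $v_1$ lies in some minimum dominating set of $G$, which is not guaranteed when $\gamma(G)\leq\ell$. The root of the problem is that you are implicitly invoking Theorem~\ref{theorem:contracdom}(i), which characterizes when \emph{some} edge contraction helps; for a \emph{fixed} edge $uv$ the relevant condition is different (roughly, the existence of a set $S$ of size $\gamma-1$ such that $S\cup\{u,v\}$ dominates, or of a size-$(\gamma-1)$ set avoiding $u,v$ that dominates $V\setminus\{u,v\}$ and meets $N(u)\cup N(v)$), and none of your candidate edges satisfies it uniformly across instances. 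Making a distinguished-edge gadget work would require a genuinely new idea, not just bookkeeping.

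For comparison, the paper's proof needs no gadget at all: it is a Turing reduction from \dom{}. Given $(G,\ell)$, query the hypothetical algorithm on an arbitrary edge $e$; since contracting an edge changes $\gamma$ by $0$ or $-1$, the answer tells you whether $(G,\ell)$ is equivalent to $(G\setminus e,\ell-1)$ or to $(G\setminus e,\ell)$, and each query shrinks the graph by one vertex, so after at most $n$ queries \dom{} is decided. Since the theorem only asserts non-existence of a polynomial-time algorithm under $\mathsf{P}\neq\mathsf{NP}$ (rather than $\mathsf{NP}$-hardness under many--one reductions), this Turing reduction suffices and sidesteps entirely the difficulty your proposal runs into.
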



\section{Algorithms}
\label{s-poly}

We now deal with cases in which \kcontracd{} is tractable, for $k=1,2$. A first simple approach to the problem, from which Proposition \ref{prop:boundeddom} readily follows, is based on brute force.

\begin{proposition}[\app]
\label{prop:boundeddom}
For $k=1,2$, \kcontracd{} can be solved in polynomial time for a graph class $\mathcal{C}$, if either
\begin{itemize}
\item[(a)] $\mathcal{C}$ is closed under edge contractions and \dom{} can be solved in polynomial time for $\mathcal{C}$; or
\item[(b)] for every $G\in\mathcal{C}$, $\gamma(G)\leq q$, where $q$ is some fixed contant; or 
\item[(c)] $\mathcal{C}$ is the class of $(H+K_1)$-free graphs, where $|V_H|=q$ is a fixed constant and \kcontracd{} is polynomial-time solvable on $H$-free graphs.
\end{itemize}
\end{proposition}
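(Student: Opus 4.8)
The plan is to solve \kcontracd{}, for $k\in\{1,2\}$, by brute force. We enumerate every graph $G'$ that can be obtained from the input graph $G$ by a sequence of at most $k$ edge contractions; since an edge contraction never creates new edges, there are $O(m^{k})$ such graphs, where $m=|E(G)|$, which is polynomial for fixed $k$ (for $k=2$ this amounts to ranging over all edges $e$ of $G$, all edges $f$ of $G\backslash e$, and all single contractions). For each such $G'$ we then test whether $\gamma(G')\leq\gamma(G)-1$, answering \yes{} precisely when some $G'$ passes the test. Thus the whole procedure is polynomial as soon as we can, for each graph appearing in the enumeration (including $G$ itself), compute its domination number in polynomial time, and each of the hypotheses (a)--(c) supplies exactly such a subroutine. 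Throughout, we use the elementary fact that contracting an edge cannot increase the domination number: if $D$ dominates $G$ and $e=uv$ is contracted into a vertex $w$, then substituting $w$ for whichever of $u,v$ lies in $D$ produces a dominating set of $G\backslash e$ of size at most $|D|$, so $\gamma(G\backslash e)\leq\gamma(G)$; iterating, every graph obtained from $G$ by edge contractions has domination number at most $\gamma(G)$.

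For (a), since $\mathcal{C}$ is closed under edge contractions, every $G'$ arising in the enumeration lies in $\mathcal{C}$, so by hypothesis $\gamma(G')$ is computable in polynomial time (for instance by running the assumed \dom{} algorithm on $\mathcal{C}$ for each threshold $\ell\in\{0,\ldots,|V(G')|\}$); comparing with $\gamma(G)$ settles the instance.

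For (b), no closure assumption on $\mathcal{C}$ is needed. Since $\gamma(G)\leq q$, one can compute $\gamma(G)$ in time $O(n^{q})$ by inspecting all vertex subsets of size at most $q$; and by the fact recalled above, every $G'$ obtained from $G$ by at most $k$ contractions satisfies $\gamma(G')\leq\gamma(G)\leq q$, so $\gamma(G')$ is computable in time $O(n^{q})$ in the same way. For (c), given an $(H+K_1)$-free graph $G$ we first check in time $O(n^{q})$ whether $G$ has an induced subgraph isomorphic to $H$. If it does not, then $G$ is $H$-free and we call the assumed polynomial-time algorithm for \kcontracd{} on $H$-free graphs. If it does, let $S$, with $|S|=q$, be the vertex set of such an induced copy of $H$; then $S$ is a dominating set of $G$, because a vertex $v\notin S$ nonadjacent to all of $S$ would yield $G[S\cup\{v\}]\cong H+K_1$, contradicting $(H+K_1)$-freeness. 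Hence $\gamma(G)\leq q$, and we conclude exactly as in case (b).

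This is essentially a routine brute-force argument, and I do not expect a genuine obstacle. The two points deserving a little attention are the observation that edge contraction never increases $\gamma$, which is what lets us bound --- and hence compute --- the domination numbers of the contracted graphs in (b) and (c) without assuming $\mathcal{C}$ closed under contractions, and, in (c), the reduction to (b) via the remark that an induced copy of $H$ is automatically a dominating set. One should also take care that the enumeration covers sequences of \emph{at most} $k$ contractions rather than exactly $k$, which for $k\in\{1,2\}$ simply means including the single-contraction graphs.
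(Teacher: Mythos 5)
Your proof is correct. Parts (a) and (c) follow the paper's own argument essentially verbatim: in (a) the closure of $\mathcal{C}$ under contractions lets you compute $\gamma$ of each contracted graph via the assumed \dom{} algorithm, and in (c) you make the same reduction --- test $H$-freeness in $\Oh(n^q)$ time, invoke the assumed algorithm if $G$ is $H$-free, and otherwise note that the vertex set of an induced copy of $H$ must dominate $G$ by $(H+K_1)$-freeness, so $\gamma(G)\leq q$ and case (b) applies. Where you genuinely diverge is case (b): the paper computes $\gamma(G)$ by enumerating subsets of size at most $q+1$ and then checks the combinatorial conditions of Theorem~\ref{theorem:contracdom}(i)--(ii) on $G$ itself (a non-stable minimum dominating set, respectively a dominating set of size $\gamma(G)+1$ containing two edges), so it never needs to compute the domination number of any contracted graph. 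You instead enumerate the $\Oh(m^k)$ graphs reachable by at most $k$ contractions and compute their domination numbers directly; this requires your observation that edge contraction never increases $\gamma$, which guarantees that every contracted graph still has domination number at most $q$ and hence admits the $\Oh(n^q)$ subset enumeration even though it may have left the class $\mathcal{C}$. That monotonicity observation is correct and is the one extra ingredient your route needs that the paper's does not; in exchange, your argument is uniform across (a)--(c) and avoids relying on the characterization of $ct_\gamma$. Both routes are valid.
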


Proposition \ref{prop:boundeddom}(b) provides an algorithm for \contracd{} parameterized by the size of a minimum dominating set of the input graph running in $\mathsf{XP}$-time. Note that this result is optimal as \contracd{} is $\mathsf{W}$[1]-hard with such parameterization from Theorem~\ref{thm:w1contracd}.

We further show that even though simple, this brute force method provides polynomial-time algorithms for a number of relevant classes of graphs, such as graphs of bounded tree-width and graphs of bounded mim-width. We first state the following result and observation.

\begin{theorem}{\normalfont \cite{VATSHELLE}}\label{thm:PCwidth}
Given a graph $G$ and a decomposition of width $t$, \dom{} can be solved in time $\Oh^*(3^t)$ when parameterized by tree-width, and in time $\Oh^*(n^{3t})$ when parameterized by mim-width.
\end{theorem}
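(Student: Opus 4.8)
The plan is to prove the two bounds independently, in each case by a bottom-up dynamic programme along the supplied decomposition; neither half is original to this paper, but here is how I would reconstruct the argument.

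For the tree-width bound I would first turn the given tree decomposition of width $t$ into a \emph{nice} tree decomposition of the same width in linear time, so that only leaf, introduce-vertex, forget-vertex and join nodes need to be handled. For a bag $B$ the table is indexed by assignments of one of three states to each $v\in B$ --- ``$v$ lies in the partial dominating set'', ``$v$ is not chosen but already dominated'', ``$v$ is not chosen and not yet dominated'' --- and stores, for each such colouring, the least size of a partial solution on the already-processed subgraph consistent with it; at the root one reads off $\gamma(G)$. Leaf, introduce and forget transitions are routine (in particular, a vertex carrying the ``not yet dominated'' state is simply discarded at the forget node). The join node is the only real step: the parent's state on a non-chosen vertex must be ``dominated'' exactly when it is dominated on at least one of the two children, while the chosen set is common to both sides. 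Done naively this is a $4^{|B|}$ operation (one factor $3$ for the chosen / not-chosen split, times $2$ over the dominated pattern on each side), giving an $\Oh^*(4^t)$ algorithm; to reach $\Oh^*(3^t)$ I would, for each fixed set of chosen vertices, replace the join by a \emph{covering product} over the subsets of the remaining vertices, evaluated by fast subset convolution in the $(\min,+)$ setting in $\Oh^*(2^{|B\setminus\text{chosen}|})$ time (the trick due to van Rooij, Bodlaender and Rossmanith). Summing $\binom{t}{i}2^{t-i}$ over $i$ then gives $3^t$ up to polynomial factors.

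For the mim-width bound I would process the given branch decomposition $(T,\delta)$ of $V(G)$, of mim-width $t$, from the leaves up. Fix the cut $(A,\overline A)$ associated with an edge of $T$. The engine is the \emph{$d$-neighbour equivalence}: $S,S'\subseteq A$ are equivalent if every vertex of $\overline A$ has the same number of neighbours in $S$ as in $S'$, counted up to a threshold $d$. Since the bipartite graph across the cut has maximum induced matching at most $t$, a bound from Vatshelle's framework gives at most $n^{d t}$ equivalence classes on each side, and one can compute a canonical, bounded-support \emph{representative} in each class. As \dom{} is a $(\sigma,\rho)$-domination problem with $\sigma=\{0,1,2,\dots\}$ and $\rho=\{1,2,\dots\}$ --- for which a single-step threshold ($d=1$) on the $\rho$-side already certifies domination --- I would run the generic $(\sigma,\rho)$ dynamic programme of Bui-Xuan, Telle and Vatshelle: the table at a node is indexed by a representative of $A$ together with a ``requirement'' representative of $\overline A$ (what the still-unseen part must dominate), and the join of two children is evaluated by iterating over compatible triples of representatives. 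A careful count of the truncation thresholds actually needed shows the index set has size $\Oh^*(n^{3t})$ and that each node is processed in time polynomial in its size, yielding $\Oh^*(n^{3t})$ overall; at the root, the entry ``everything dominated, nothing still required'' gives $\gamma(G)$.

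The substantive difficulty in both parts is not correctness of the transitions but pinning down the constant in the exponent. On the tree-width side a direct implementation only achieves $\Oh^*(4^t)$, and squeezing it to $\Oh^*(3^t)$ genuinely needs the subset-convolution machinery together with pruning the ``not yet dominated'' state at forget nodes, so that the convolution runs only over the dominated/undominated dichotomy. On the mim-width side the crux is the bound on the number of $d$-neighbour equivalence classes in terms of the maximum induced matching across a cut, and the observation that for domination it suffices to remember neighbour counts capped at a small constant; choosing these constants correctly is precisely what converts a crude $n^{\Oh(t)}$ estimate into the stated $n^{3t}$. Everything else --- niceness of the tree decomposition, correctness of the individual table updates, and the polynomial bookkeeping --- is standard, and the complete argument can be found in \cite{VATSHELLE}.
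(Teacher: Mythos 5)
This statement is imported verbatim from Vatshelle's thesis \cite{VATSHELLE}; the paper gives no proof of it, so there is nothing internal to compare against. Your reconstruction correctly summarizes the standard arguments from the literature: the three-state dynamic programme on a nice tree decomposition with the join handled by fast subset convolution in the $(\min,+)$ semiring to bring $4^t$ down to $\Oh^*(3^t)$, and the $d$-neighbour-equivalence/representative dynamic programme for $(\sigma,\rho)$-domination (with $d=1$ for \dom{}) whose table size is governed by the $n^{dt}$ bound on equivalence classes across a cut of induced-matching number $t$, yielding $\Oh^*(n^{3t})$. These are exactly the proofs in the cited sources, so your account is consistent with what the paper relies on.
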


\begin{observation}[\app]
\label{obs:mimwcontrac}
$mimw(G\setminus e)\leq mimw(G)+1$.
\end{observation}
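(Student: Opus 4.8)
The plan is to argue at the level of branch decompositions. Recall that a branch decomposition of $G$ is a pair $(T,\delta)$ with $T$ a subcubic tree and $\delta$ a bijection between the leaves of $T$ and $V(G)$; every edge $f$ of $T$ partitions the leaves, hence induces a cut $(A_f,\overline{A_f})$ of $V(G)$, and the width of $(T,\delta)$ is the maximum over $f$ of the largest induced matching in the bipartite graph $G[A_f,\overline{A_f}]$ with parts $A_f$ and $\overline{A_f}$; then $mimw(G)$ is the least width of such a decomposition. Write $e=uv$ and let $w$ be the vertex of $G\setminus e$ obtained by identifying $u$ and $v$ (we may assume $|V(G)|\ge 3$, the remaining cases being trivial). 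Starting from an optimal branch decomposition $(T,\delta)$ of $G$, I will build a branch decomposition of $G\setminus e$ of width at most $mimw(G)+1$, which proves the claim.

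The construction is the obvious one: delete from $T$ the leaf $\delta^{-1}(v)$; its unique neighbour $p$ now has degree two, so suppress it, replacing the two edges of $T$ at $p$ other than $p\,\delta^{-1}(v)$ by a single edge joining their other endpoints; finally rename the leaf $\delta^{-1}(u)$ by $w$. This yields a branch decomposition $(T',\delta')$ of $G\setminus e$ whose edge set consists of all edges of $T$ incident neither to $p$ nor to $\delta^{-1}(v)$, together with one new ``merged'' edge. To bound its width I use two elementary facts about a bipartite graph $H$ with fixed bipartition: (i) for any vertex $x$, $\mathrm{mim}(H)\le\mathrm{mim}(H-x)+1$, since a maximum induced matching of $H$, with the edge at $x$ removed if there is one, is an induced matching of $H-x$; and (ii) $\mathrm{mim}$ is monotone under induced subgraphs. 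The key observation is that $G$ and $G\setminus e$ have identical adjacencies between any two vertices other than $u,v,w$. Consequently, for every edge $f$ of $T'$ with induced cut $(A',\overline{A'})$ of $V(G\setminus e)$, the bipartite graph $(G\setminus e)[A'\setminus\{w\},\overline{A'}\setminus\{w\}]$ is an induced bipartite subgraph of $G[A_g,\overline{A_g}]$ for a suitable edge $g$ of $T$: one takes $g=f$ when $f$ is inherited from $T$ — then the $T$-cut of $f$ is recovered from $(A',\overline{A'})$ by returning $v$ to the side containing $p$ and renaming $w$ back to $u$ — and one takes for $g$ either of the two $T$-edges at $p$ when $f$ is the merged edge. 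Applying (i) with $x=w$ and then (ii) gives that the largest induced matching of $(G\setminus e)[A',\overline{A'}]$ has size at most $\mathrm{mim}(G[A_g,\overline{A_g}])+1\le mimw(G)+1$. As $f$ was arbitrary, $(T',\delta')$ has width at most $mimw(G)+1$, and hence $mimw(G\setminus e)\le mimw(G)+1$.

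The conceptual content is essentially the observation that contraction perturbs the bipartite cut graphs only at $\{u,v,w\}$, which by (i) costs at most one unit of induced-matching size; the only genuine work is the routine verification in the previous paragraph, namely determining on which side of each $T$-cut the vertices $u$ and $v$ fall, and correctly reading off the cut realised by the merged edge (including the subcase where $\delta^{-1}(u)$ or $\delta^{-1}(v)$ is itself a neighbour of $p$, where one of the two relevant $T$-sides degenerates to a single vertex). This is where I expect the main, though modest, obstacle to lie.
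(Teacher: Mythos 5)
Your proof is correct and takes essentially the same route as the paper: the paper's proof simply notes that $G\setminus e$ is obtained from $G$ by deleting $u$ and $v$ and adding a new vertex with neighborhood $N(u)\cup N(v)$, and then invokes the facts that vertex deletion does not increase mim-width and that adding a vertex increases it by at most one (Observation~\ref{obs:mimaddv}). Your explicit branch-decomposition construction, together with your facts (i) and (ii), is precisely the unfolded proof of those two cited facts combined into a single step, so the underlying mechanism --- the new vertex $w$ costs at most one unit of induced matching in each cut --- is identical.
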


\begin{proposition}\label{prop:xpmim}
Given a decomposition of width $t$, \kcontracd{} can be solved in time $\Oh^*(3^t)$ in graphs of tree-width at most $t$ and in time $\Oh^*(n^{3t})$ in graphs of mim-width at most $t$, for $k=1,2$.  
\end{proposition}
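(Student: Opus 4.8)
The plan is to reduce \kcontracd{} (for $k\in\{1,2\}$) to $\Oh^*(1)$ instances of \dom{}, each on a graph that still admits a decomposition of width essentially $t$, and then invoke Theorem~\ref{thm:PCwidth}. Since contracting no edge leaves the domination number unchanged, $G$ is a \yes-instance if and only if there is an integer $j$ with $1\le j\le k$ and a sequence of $j$ edge contractions turning $G$ into a graph $G'$ with $\gamma(G')\le\gamma(G)-1$; note that each such $G'$ is again connected. At every step there are at most $\binom{n}{2}$ edges to contract, so for $k\in\{1,2\}$ there are only $\Oh(n^{2k})=\Oh^*(1)$ such sequences. The algorithm computes $\gamma(G)$ once, then for each sequence builds the resulting graph $G'$, computes $\gamma(G')$, and answers \yes{} if and only if some $G'$ satisfies $\gamma(G')<\gamma(G)$.

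It remains to argue that each $G'$ produced carries a decomposition whose width keeps the cost of computing $\gamma(G')$ within the claimed bound. In the tree-width setting, a tree decomposition of $G$ of width $t$ yields a tree decomposition of $G\setminus e$ of width at most $t$: replace, in every bag, the two endpoints of $e$ by the single new vertex; the bags containing the new vertex are the union of two intersecting subtrees, hence a subtree, and every edge of $G\setminus e$ remains covered. Iterating this $j\le k$ times gives a width-$t$ decomposition of $G'$, so $\gamma(G')$ is computed in time $\Oh^*(3^t)$, and the overall running time is $\Oh^*(3^t)$. In the mim-width setting, (the proof of) Observation~\ref{obs:mimwcontrac} produces, in polynomial time, a decomposition of $G\setminus e$ of mim-width at most $t+1$ from a decomposition of $G$ of mim-width $t$; iterating $j\le k\le 2$ times gives a decomposition of $G'$ of mim-width at most $t+2$, so $\gamma(G')$ is computed in time $\Oh^*(n^{3(t+2)})=\Oh^*(n^{3t})$, and the overall running time is $\Oh^*(n^{3t})$.

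The reduction itself is immediate; the only nontrivial ingredient is that a decomposition of $G\setminus e$ of width $t+\Oh(1)$ can be extracted from one of $G$, which for mim-width is exactly Observation~\ref{obs:mimwcontrac}. The additive loss of $1$ per contraction is what prevents us from simply invoking Proposition~\ref{prop:boundeddom}(a), since the class of graphs of mim-width at most $t$ is not closed under edge contraction; because $k$ is a fixed constant, this loss is harmless. We remark that one could alternatively stay on $G$ and use Theorem~\ref{theorem:contracdom}: $ct_\gamma(G)\le k$ amounts to the existence of a dominating set of size $\gamma(G)$ (resp. $\gamma(G)+1$) containing both endpoints of one prescribed edge (resp. all endpoints of two prescribed edges), which the dynamic programs underlying Theorem~\ref{thm:PCwidth} handle within the same time bound after fixing a constant number of vertices to lie in the solution; enumerating the $\Oh^*(1)$ choices of prescribed edges then yields the same result.
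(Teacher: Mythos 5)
Your proof is correct and follows essentially the same route as the paper's: brute-force enumeration of the $\Oh^*(1)$ contraction sequences, computing $\gamma$ of each contracted graph via Theorem~\ref{thm:PCwidth}, and bounding the width of the contracted graphs by $t$ (tree-width) and $t+2$ (mim-width, via Observation~\ref{obs:mimwcontrac}), noting that the relevant decompositions can be produced in polynomial time. Your explicit construction of the tree decomposition of $G\setminus e$ is a nice touch where the paper merely cites the well-known fact $tw(G\setminus e)\le tw(G)$.
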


\begin{proof}
We use the above-mentioned brute force approach and Theorem~\ref{thm:PCwidth}. That is, for $k=1$, the algorithm first computes $\gamma(G)$ and then computes $\gamma(G\setminus e)$ for every $e \in E(G)$. For $k=2$, the algorithm proceeds similarly for every pair of edges. We next show that the width parameters increase by a constant when contracting at most two edges. It is a well-known fact that $tw(G\setminus e)\leq tw(G)$ and so, $tw(G\setminus \{e,f\})\leq tw(G)$. By Observation~\ref{obs:mimwcontrac}, $mimw(G\setminus e)\leq mimw(G)+1$ which implies that $mimw(G\setminus \{e,f\})\leq mimw(G)+2$. Also note that, given a tree (resp.\ mim) decomposition of width $t$ for $G$, we can construct in polynomial time decompositions of width $t$ (resp.\ at most $t+2$) for $G\setminus e$ and $G\setminus \{e,f\}$. This implies that $\gamma(G\setminus e)$ and $\gamma(G\setminus \{e,f\})$ can also be computed in time $\Oh^*(3^t)$ if $G$ is a graph of tree-width at most $t$, and in time $\Oh^*(n^{3t})$ if $G$ is a graph of mim-width at most $t$.
\qed
\end{proof}

Proposition~\ref{prop:xpmim} provides an algorithm for \contracd{} parameterized by mim-width running in $\mathsf{XP}$-time; this result is optimal as \contracd{} is $\mathsf{W}$[1]-hard parameterized by mim-width from Theorem~\ref{thm:w1contracd}. 

Since \dom{} is polynomial-time solvable in $P_4$-free graphs (see \cite{haynes}), it follows from Proposition \ref{prop:boundeddom}(a) that \kcontracd{} can also be solved efficiently in this graph class. However, \dom{} is $\mathsf{NP}$-complete for $P_5$-free graphs (see \cite{splitdom}) and thus, it is natural to examine the complexity of \kcontracd{} for this graph class. As we next show, \kcontracd{} is in fact polynomial-time solvable on $P_5$-free graphs, for $k~=~1,2$. 

\begin{lemma}
\label{lemma:p5free}
If $G$ is a $P_5$-free graph with $\gamma (G) \geq 3$, then $ct_{\gamma} (G) = 1$.
\end{lemma}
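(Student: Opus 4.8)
The plan is to use Theorem~\ref{theorem:contracdom}(i): it suffices to show that a $P_5$-free graph $G$ with $\gamma(G)\geq 3$ admits a minimum dominating set that is not stable, i.e.\ a minimum dominating set containing an edge. The natural approach is to start with an arbitrary minimum dominating set $D$ and, assuming $D$ is stable, modify it into another minimum dominating set containing an edge, exploiting the $P_5$-freeness to control distances. The key structural fact I would invoke is the classical one (Bacs\'o--Tuza) that a connected $P_5$-free graph has a dominating clique or a dominating $P_3$; more useful here, though, is the local observation that in a $P_5$-free graph, if $D$ is a stable dominating set with $|D|\geq 3$, then the vertices of $D$ cannot be pairwise ``far apart'' in the following sense: for any three vertices of $D$, some pair is joined by a path of length $2$ through a common neighbor, since a longer configuration would expose an induced $P_5$.

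The main steps, in order, would be: (1) Assume for contradiction (or rather, towards building the desired set) that every minimum dominating set of $G$ is stable; fix one such set $D=\{d_1,\dots,d_t\}$ with $t=\gamma(G)\geq 3$. (2) Using $P_5$-freeness, argue that there exist two vertices $d_i,d_j\in D$ with $d(d_i,d_j)=2$: if all pairwise distances were $\geq 3$, pick three of them and a shortest path realizing the distance between two, then append a shortest path toward the third; since $G$ is connected and distances are large, this yields an induced $P_5$ (one must check that the path can be chosen induced and that the third vertex attaches at the right spot, using that $D$ dominates everything). (3) Let $w$ be a common neighbor of $d_i$ and $d_j$. Now try to replace one of $d_i,d_j$ by $w$: consider $D' = (D\setminus\{d_j\})\cup\{w\}$. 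This $D'$ has size $t$ and contains the edge $d_iw$, so if $D'$ is dominating we are done. If $D'$ fails to dominate, the undominated vertices are private neighbors of $d_j$ with respect to $D$ that are not adjacent to $w$; call this set $P$. (4) Handle the case $P\neq\emptyset$: take $p\in P$. Then $p$ is adjacent to $d_j$ but not to $w$ nor to any other $d_k$. I would then look at the induced path $p$--$d_j$--$w$--$d_i$ (an induced $P_4$, since $p\nsim w$, $p\nsim d_i$, and $d_j\nsim d_i$ as $D$ is stable), and try to extend it: since $G$ is $P_5$-free, every vertex outside must see one of its endpoints or an internal vertex, which forces $d_i$ (or $p$) to have restricted neighborhoods and ultimately lets me find a *different* minimum dominating set containing an edge — e.g.\ swapping in $w$ while also swapping some $d_k$, or observing $\{d_i,w\}$ together already dominate $N[d_j]\cup\{p\}$ so that some smaller-looking reshuffle works. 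The cleanest route may be to show $P_5$-freeness forces $|P|$-type obstructions to collapse, so that $D'$ is in fact dominating.

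The hard part will be step (4): ruling out, or circumventing, the private neighbors of $d_j$ that are not picked up by $w$. In a general graph the swap $d_j\mapsto w$ can genuinely fail, so the whole weight of the argument rests on squeezing a contradiction (an induced $P_5$) out of such a private neighbor $p$ together with the distance-$2$ pair and the hypothesis $\gamma(G)\geq 3$ (which guarantees there is a third dominator $d_k$ to play against, giving enough vertices to build a $P_5$). I expect one needs a short case analysis on whether $p$ is adjacent to $w$'s other neighbors and on where $d_k$ attaches to the path $p$--$d_j$--$w$--$d_i$; the hypothesis $\gamma\geq 3$ is exactly what prevents the degenerate situations (handled separately, and indeed $\gamma\leq 2$ graphs are excluded from the statement). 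I would organize it as: if no such $p$ exists, $D'$ works; if some $p$ exists, derive an induced $P_5$ on $\{d_i,w,d_j,p\}$ plus a suitable fifth vertex on a shortest $d_k$--$\{d_i,w,d_j,p\}$ path, contradicting $P_5$-freeness.
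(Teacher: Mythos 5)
Your overall strategy (exhibit a non-stable minimum dominating set and invoke Theorem~\ref{theorem:contracdom}(i)) is the right one, but the proposal has two genuine gaps, both at the places you yourself flag as ``to be checked.'' First, Step (2) asserts that a stable minimum dominating set in a $P_5$-free graph must contain a pair at distance exactly $2$, i.e.\ that three dominators pairwise at distance $3$ force an induced $P_5$. This is not proved, and it is not how the difficulty should be resolved: a stable minimum dominating set with a pair at distance $3$ is not an impossible configuration to be refuted, it is a configuration from which one should directly \emph{build} a better dominating set. The paper takes $u,v\in D$ realizing the \emph{maximum} pairwise distance (which is at most $3$ by $P_5$-freeness), and in the distance-$3$ case lets $x,y$ be the two internal vertices of a shortest $u$--$v$ path and shows $N(u)\cup N(v)\subseteq N(x)\cup N(y)$, so that $(D\setminus\{u,v\})\cup\{x,y\}$ is a minimum dominating set containing the edge $xy$. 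Your route would instead have to prove a structural impossibility that is harder than (and not needed for) the lemma.

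Second, Step (4) is the heart of the argument and is left essentially open (``I expect one needs a short case analysis\dots the cleanest route may be\dots''). The single swap $D'=(D\setminus\{d_j\})\cup\{w\}$ for a common neighbor $w$ of a distance-$2$ pair can indeed fail, and it is not clear that a failing private neighbor $p$ always yields an induced $P_5$ on $\{d_i,w,d_j,p\}$ plus one more vertex. The paper's Case~2 avoids this by exploiting the maximality of $d(u,v)=2$: every third dominator $w\in D$ is then at distance $2$ from \emph{both} $u$ and $v$, and one works with the midpoints $x,y$ of shortest paths from $u$ and $v$ to $w$. If $x=y$, either some dominator's private neighbors are all adjacent to $x$ (giving the swap $(D\setminus\{w\})\cup\{x\}$), or three bad private neighbors $r,s,t$ must be pairwise adjacent by $P_5$-freeness and then $t,r,u,x,v$ induce a $P_5$. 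If $x\neq y$, then $xy\in E$ and every relevant vertex is shown to be adjacent to $x$ or $y$, so $(D\setminus\{u,v\})\cup\{x,y\}$ works. Note in particular that the successful swaps are not the one you propose (replacing one endpoint of the distance-$2$ pair by the common neighbor), but rather replacing the \emph{third} dominator, or replacing \emph{both} $u$ and $v$ at once. As written, your proposal identifies the correct target and the correct tools but does not close either of the two cases.
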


\begin{proof}
Let $G=(V,E)$ be a $P_5$-free graph and $D$ be a minimum dominating set of $G$. Suppose that $D$ is a stable set and consider $u,v \in D$ such that $d(u,v) = \max_{x,y \in D} d(x,y)$. Since $G$ is $P_5$-free, $d(u,v) \leq 3$ and, since $D$ is stable, $d(u,v) \geq 2$. We distinguish two cases depending on this distance.

\noindent{\bf Case 1.} $d(u,v) = 3$. Let $x$ (resp. $y$) be the neighbor of $u$ (resp. $v$) on a shortest path from $u$ to $v$. Then, $N(u) \cup N(v) \subseteq N(x) \cup N(y)$; indeed, if $a$ is a neighbor of $u$, then $a$ is nonadjacent to $v$ (recall that $d(u,v) = 3$) and thus, $a$ is adjacent to either $x$ or $y$ for otherwise $a,u,x,y$ and $v$ would induce a $P_5$ in $G$. The same holds for any neighbor of $v$. Consequently, $(D\backslash \{u,v\}) \cup \{x,y\}$ is a minimum dominating set of $G$ which is not stable; the result then follows from Theorem~\ref{theorem:contracdom}(i).

\noindent{\bf Case 2.} $d(u,v) = 2$. Since $D$ is stable and $d(u,v) = \max_{x,y \in D} d(x,y) = 2$, it follows that every $w \in D\backslash \{u,v\}$ is at distance two from both $u$ and $v$. Let $x$ (resp. $y$) be the vertex on a shortest path from $u$ (resp. $v$) to some vertex $w \in D\backslash \{u,v\}$. 

Suppose first that $x=y$. If every private neighbor of $w$ with respect to $D$ is adjacent to $x$ then $(D \backslash \{w\}) \cup \{x\}$ is a minimum dominating set of $G$ which is not stable; the result then follows from Theorem \ref{theorem:contracdom}(i). We conclude similarly if every private neighbor of $u$ or $v$ with respect to $D$ is adjacent to $x$. Thus, we may assume that $w$ (resp. $u$; $v$) has a private neighbor $t$ (resp. $r$; $s$) with respect to $D$ which is nonadjacent to $x$. Since $G$ is $P_5$-free, it then follows that $r$, $s$ and $t$ are pairwise adjacent. But then, $t,r,u,x$ and $v$ induce a $P_5$, a contradiction.

Finally, suppose that $x \neq y$ (we may also assume that $uy,vx \not\in E$ as we otherwise fall back in the previous case). Then, $xy \in E$ for $u,x,w,y$ and $v$ would otherwise induce a $P_5$. Now, if $a$ is a private neighbor of $u$ with respect to $D$ then $a$ is adjacent to either $x$ or $y$ ($a,u,x,y$ and $v$ otherwise induce a $P_5$); we conclude similarly that any private neihbor of $v$ with respect to $D$ is adjacent to either $x$ or $y$. If $b$ is adjacent to both $u$ and $v$ but not $w$, then it is adjacent to $x$ (and $y$) as $v,b,u,x$ and $w$ ($u,b,v,y$ and $w$) would otherwise induce a $P_5$. But then, $(D \backslash \{u,v\}) \cup \{x,y\}$ is a minimum dominating set of $G$ which is not stable; thus, by Theorem \ref{theorem:contracdom}(i), $ct_{\gamma} (G) = 1$ which concludes the proof.
\qed
\end{proof}

\begin{theorem}
\label{thm:p5free}
\kcontracd{} is polynomial-time solvable on $P_5$-free graphs, for $k=1,2$.
\end{theorem}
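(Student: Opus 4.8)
The plan is to case-split on $\gamma(G)$ and invoke the structural results already in place. First, if $\gamma(G) = 1$ then $G$ has a dominating vertex and is a trivial \no-instance for both $k=1$ and $k=2$ (contractions cannot increase connectivity in a way that creates a smaller dominating set when one already has size $1$); we simply check for a dominating vertex in polynomial time. Second, if $\gamma(G) \geq 3$, then Lemma~\ref{lemma:p5free} tells us $ct_\gamma(G) = 1$, so $G$ is a \yes-instance for \contracd{} and also for $2$-\textsc{Edge Contraction($\gamma$)}; again this is decided in polynomial time just by computing $\gamma(G)$ (or even just a lower bound certifying $\gamma(G)\geq 3$ together with $P_5$-freeness). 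The only remaining case is $\gamma(G) = 2$.

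For the case $\gamma(G) = 2$, I would observe that the bounded-domination brute-force machinery applies directly. Since $\gamma(G) = 2$ is a fixed constant, Proposition~\ref{prop:boundeddom}(b) immediately gives that both $1$- and $2$-\textsc{Edge Contraction($\gamma$)} are polynomial-time solvable on any graph with $\gamma \leq 2$. Concretely, for $k=1$ one contracts each of the $O(|E|)$ edges and tests whether the resulting graph has a dominating vertex; for $k=2$ one does the same over all $O(|E|^2)$ pairs of edges, each test being a check for a dominating vertex. Since $P_5$-free graphs with $\gamma(G) = 2$ still need handling and one must recognize which case one is in, the full algorithm is: compute $\gamma(G)$ (feasible here because when $\gamma(G)\le 2$ it is found by brute force over singletons and pairs, and when $\gamma(G)\ge 3$ we need no exact value), then branch to the appropriate case above.

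I do not expect any genuine obstacle here, as the theorem is essentially an assembly of Lemma~\ref{lemma:p5free} and Proposition~\ref{prop:boundeddom}(b). The one point requiring a line of care is the $\gamma(G)=2$ subcase: one should note that $P_5$-freeness is not even needed there — it is the bound $\gamma(G) \leq 2$ that drives tractability — and that the brute-force test "does $G\setminus e$ (resp.\ $G\setminus\{e,f\}$) have a dominating vertex?" is exactly checking whether the domination number dropped to~$1$, which is the only way it can drop below $2$. Thus every case is polynomial, and taking the disjunction of the three branches completes the proof.
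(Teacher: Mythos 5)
Your proposal is correct and follows essentially the same route as the paper: handle $\gamma(G)=1$ trivially, invoke Lemma~\ref{lemma:p5free} when $\gamma(G)\geq 3$, and fall back on brute force over small sets when $\gamma(G)=2$. The only (harmless) difference is that in the $\gamma(G)=2$ case the paper checks the conditions of Theorem~\ref{theorem:contracdom}(i)--(ii) on dominating sets of size two and three, whereas you contract each edge (or pair of edges) and test for a dominating vertex; both are valid polynomial-time implementations of the same idea.
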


\begin{proof}
If $G$ has a dominating vertex, then $G$ is clearly a \no-instance for both $k=1,2$. Now, for every $uv\in E(G)$, we check whether $\{u,v\}$ is a dominating set. If it is the case, then by Theorem~\ref{theorem:contracdom}(i), $G$ is a \yes-instance for \kcontracd{} for $k=1,2$. If no edge of $G$ is dominating, we consider all the pairs of nonadjacent vertices of $G$. If there exists such a pair dominating $G$ and $k=1$ then by Theorem~\ref{theorem:contracdom}(i), we have a \no-instance for \contracd{} since this implies that every minimum dominating set of $G$ is stable. For the case $k=2$, if $G$ has two nonadjacent vertices dominating $G$, we then consider all triples of vertices of $G$ to check whether there exists one which is dominating and contains at least two edges (see Theorem \ref{theorem:contracdom}(ii)). Finally, both for $k=1$ and $k=2$, if $G$ has no dominating set of size at most two, then by Lemma~\ref{lemma:p5free}, $G$ is a \yes-instance for \kcontracd{}.
\qed
\end{proof}


\section{Conclusion}

In this paper, we studied the \kcontracd{} problem and provided the first complexity results. In particular, we showed that $1$-\textsc{Edge Contraction($\gamma$)} is $\mathsf{NP}$-hard for $P_t$-free graphs, $t\geq 9$, but polynomial-time solvable for $P_5$-free graphs; it would be interesting to determine the complexity status for $P_{\ell}$-free graphs, for $\ell\in\{6,7,8\}$. Similarly, the complexity of $2$-\textsc{Edge Contraction($\gamma$)} for $P_t$-free graphs, with $t\geq 6$, remains an interesting open problem.


\newpage
\appendix
\section*{Missing Proofs}

Section~\ref{section:clmct3} contains the proof of Claim \ref{clm:ct3} in Theorem \ref{thm:ct3}.

\noindent
Section~\ref{app1} contains the proof of Theorem~\ref{thm:gamma2}.

\noindent
Section~\ref{section:P9} contains the end of the proof ot Theorem \ref{theorem:1econtracd}.

\noindent
Section~\ref{section:3sub} contains the proof of Lemma \ref{lemma:3sub}.

\noindent 
Section~\ref{section:nop} contains the proof of Theorem \ref{thm:nop}.

\noindent
Section~\ref{section:easy} contains the proof of Proposition \ref{prop:boundeddom}.

\noindent
Section~\ref{section:mimw} contains the proof of Observation \ref{obs:mimwcontrac}.\\

\section{The proof of Claim \ref{clm:ct3} in Theorem \ref{thm:ct3}}
\label{section:clmct3}

Assume that $\gamma (G_{\Phi}) = \vert X \vert + 4 \vert C \vert$ and consider a minimum dominating set $D$ of $G_{\Phi}$. We first show that $D$ is a stable set which would imply that $ct_{\gamma} (G_{\Phi}) > 1$ (see Theorem \ref{theorem:contracdom}(i)). First note that Observation \ref{obs:S} implies that $\vert D \cap V(G_x) \vert = 1$ and $\vert D \cap V(G_c) \vert =4$, for any variable $x\in X$ and any clause $c\in C$. It then follows from Observation \ref{obs:Snotdom} that no truth vertex is dominated by some vertex $t_i$ or $x'_i$ in some clause gadget $G_c$ with $i\in \{1,2,3\}$; in particular, this implies that there can exist no edge in $D$ having one endvertex in some gadget $G_x$ ($x \in X$) and the other in some gadget $G_c$ ($c \in C$). Hence, it is enough to show that for any $c\in C$, $D \cap V(G_c)$ is a stable set.

Now consider a clause gadget $G_c$. It follows from Observation \ref{obs:Snotdom} that if there exists $i \in \{1,2,3\}$ such that $a_i \not \in D$ then $b_i \in D$ since $a_i$ must be dominated (also note that by Observation \ref{obs:Snotdom}, if $a_i \in D$ then $b_i \not\in D$). Hence, for any $i\in \{1,2,3\}$, exactly one of $a_i$ and $b_i$ belongs to $D$. But then, by Observation \ref{obs:Snotdom} and since $\vert D \cap V(G_c) \vert =4$ , we immediately conclude that $D \cap V(G_c)$ is a stable set and so, $D$ is a stable set.

Now, suppose to the contrary that $ct_{\gamma} (G_{\Phi}) = 2$ i.e., there exists a dominating set $D$ of $G_{\Phi}$ of size $\gamma(G_{\Phi}) + 1$ containing two edges $e$ and $e'$ (see Theorem \ref{theorem:contracdom}(ii)). First assume that there exists $x \in X$ such that $\vert D \cap V(G_x) \vert = 2$. Then, for any $x' \neq x$, $\vert D \cap V(G_{x'}) \vert =1$; and for any $c \in C$, $\vert D \cap V(G_c) \vert = 4$ which by Observation \ref{obs:Snotdom} implies that $\{t_i,x'_i\} \cap D = \emptyset$ for any $i \in \{1,2,3\}$. Since as shown previously, $D \cap V(G_c)$ is then a stable set, it follows that $D$ contains at most one egde, a contradiction.

Thus, there must exist some $c \in C$ such that $\vert D \cap V(G_c) \vert = 5$. We then claim that $\{a_1,a_2,a_3\} \not \subset D$. Indeed, since $x_1,x_2,x_3,v_{\{x_1\}}, v_{\{x_2\}}$ and $v_{\{x_3\}}$ must be dominated, $D \cap V(K_c) \neq \emptyset$ (otherwise, at least three additional vertices of $G_c$ would be required to dominate $x_1$, $x_2$ and $x_3$), say $v_{\{x_1\}} \in D$ without loss of generality. But then, $\vert N[x_1] \cap D \vert =1$ as $x_1$ must be dominated and $\vert D \cap V(G_c) \vert =5$ and so, $D$ contains at most one edge. Therefore, there must exist $i \in \{1,2,3\}$ such that $a_i \not\in D$, say $a_1 \not\in D$ without loss of generality. Then, since $a_1$ must be dominated, either $t_1 \in D$ or $b_1 \in D$. 

Assume first that $t_1$ belongs to $D$ (note that $\{b_1,x_1\} \cap D \neq \emptyset$ by Observation \ref{obs:Si}). Then, it follows from Observation \ref{obs:Si} that either $e$ or $e'$ has an endvertex in $\{a_j,b_j,x_j\}$ for some $j \neq 1$, say $j=2$ without loss of generality. Suppose that $x_2$ is an endvertex of $e$. Then the other endvertex of $e$ should be $b_2$ for otherwise it belongs to $K_c$ and thus, $a_2$ would not be dominated. But then, we conclude by Observation \ref{obs:Si} and the fact that $\vert D \cap V(G_c) \vert = 5$, that $D$ contains only one edge. Thus, $e=a_2b_2$ or $e=a_2t_2$ and since $v_{\{x_1\}}$ must be dominated, necessarily $x_3 \in D$; but then, $a_3$ is not dominated. Therefore, it must be that $b_1$ belongs to $D$; and we conclude similarly that if $a_2$ (resp. $a_3$) is not in $D$ then $b_2$ (resp. $b_3$) belongs to $D$.

Now, since $t_1,a_1 \not \in D$, it follows that $T_{x_1} \in D$ for otherwise $t_1$ would not be dominated. But $\vert D \cap V(G_x) \vert = 1$ and so, $F_{x_1} \not\in D$; thus, $D \cap \{x'_1,v_{\{x_1\}}\} \neq \emptyset$ as $x'_1$ must be dominated and we may assume, without loss of generality, that in fact, $v_{\{x_1\}} \in D$. Then, if $D \cap \{v_{\{x_2\}},v_{\{x_3\}}\} = \emptyset$, necessarily $F_{x_2},F_{x_3} \in D$; indeed, since $\vert D \cap V(G_c) \vert = 5$, at least one among $x'_2$ and $x'_3$ does not belong to $D$, say $x'_2$ without loss of generality. But if $x'_3 \in D$, then exactly one of $a_j$ and $b_j$, for $j \neq 1$ belongs to $D$ (recall that if $a_j \not\in D$ then $b_j \in D$) and therefore, $D$ contains at most one edge. Thus, $F_{x_2},F_{x_3} \in D$ which implies that $D \cap \{t_j,a_j\} \neq \emptyset$ for $j\neq 1$ as $t_j$ must be dominated. But by Observation \ref{obs:Si} and the fact that $\vert D \cap V(G_c) \vert = 5$, we have that $\vert D \cap \{t_2,t_3\} \vert \leq 1$ and so, $D$ contains at most one edge. Thus, $D \cap \{v_{\{x_2\}},v_{\{x_3\}}\} \neq \emptyset$ and since by Observation \ref{obs:Si} $\vert D \cap V(K_c) \vert \leq 2$, we conclude that in fact $\vert D \cap V(K_c) \vert = 2$. But then, exactly one among $a_j$ and $b_j$ belongs to $D$ for $j \neq 1$ and so, $D$ contains only one edge. Consequently, no such dominating set $D$ exists and thus, $ct_{\gamma} (G_{\Phi}) = 3$.\\

Conversely, assume that $ct_{\gamma}(G_{\Phi}) = 3$ and consider a minimum dominating set $D$ of $G_{\Phi}$. It readily follows from Observations \ref{obs:S} and \ref{obs:ct3} that for any variable $x \in X$, $\vert D \cap V(G_x) \vert = 1$. Now consider a clause gadget $G_c$. Then, by Observation \ref{obs:ct3}, we obtain that $t_i\not\in D$ (resp. $x'_i \not\in D$) for $i\in \{1,2,3\}$, as otherwise it would be within distance at most 2 from the vertex in $D$ belonging to the gadget $G_{x_i}$. 

Now since for any $i \in \{1,2,3\}$, $t_i \not \in D$, if $a_i \not\in D$ then $b_i \in D$ as $a_i$ must be dominated (also note that by Observation \ref{obs:ct3}, if $a_i \in D$ then $b_i \not\in D$. Thus, by Observations \ref{obs:ai} and \ref{obs:bi}, we conclude that for any clause gadget $G_c$, $\vert D \cap \{a_1,a_2,a_3\} \vert = 2$ and $\vert D \cap \{b_1,b_2,b_3\} \vert = 1$, say $a_1,a_2,b_3 \in D$ without loss of generality. But then, $v_{\{x_3\}}$ must belong to $D$; indeed, since $b_3 \in D$, it follows that $T_{x_3} \in D$ for otherwise $t_3$ is not dominated. Observation \ref{obs:ct3} then implies that $x'_3 \not\in D$ and thus, it can only be dominated by $v_{\{x_3\}}$. But then, it follows from Observation \ref{obs:aifx} that every vertex in $G_c$ is dominated and we conclude that $\vert D \cap V(G_c) \vert =4$ by minimality of $D$. Consequently, $\vert D \vert = \vert X \vert + 4 \vert C \vert$ which concludes the proof of Claim \ref{clm:ct3}.

\section{The Proof of Theorem \ref{thm:gamma2}}
\label{app1}

The reduction is based on the following problem, which was shown to be $\mathsf{NP}$-complete by Dahlhaus et al. \cite{dahl}. 

\begin{center}
\begin{boxedminipage}{.99\textwidth}
\textsc{Exactly 3-Bounded 3-Sat}\\[2pt]
\begin{tabular}{ r p{0.8\textwidth}}
\textit{~~~~Instance:} &A formula $\Phi$ with variable set $X$ and clause set $C$ such that each variable has exactly three literals, with one of them occuring in two clauses and the other one in one, and each clause is the disjunction of two or three literals.\\
\textit{Question:} &Is $\Phi$ satisfiable?
\end{tabular}
\end{boxedminipage}
\end{center}

We reduce from \textsc{Exactly 3-Bounded 3-Sat}: given an instance $\Phi$ of this problem, with variable set $X$ and clause set $C$, we construct an equivalent instance $G_{\Phi}$ of {\sc Contraction Number($\gamma$,$2$)} as follows. First note that we may assume that $\vert X \vert \geq 4$ as \textsc{Exactly 3-Bounded 3-Sat} is otherwise polynomial-time solvable. The graph $G_{\Phi}$ then contains a copie of the graph $H$ depicted in Fig. \ref{fig:basegad}. For any variable $x \in X$, we introduce the gadget $G_x$ which has two distinguished \textit{literal vertices} $x$ and $\overline{x}$, as depicted in Fig. \ref{fig:vargad}. For any clause $c \in C$, we introduce a copie of $K_2$ with a distinguished \textit{clause vertex} $c$ and a distinguished \textit{transmitter vertex} $t_c$. Finally, for each clause $c \in C$, we add an edge between the clause vertex $c$ and the literal vertices whose corresponding literals belong to $c$; furthermore, we add an edge between the transmitter vertex $t_c$ and vertices $1$ and $3$ of the graph $H$. We first prove the following.

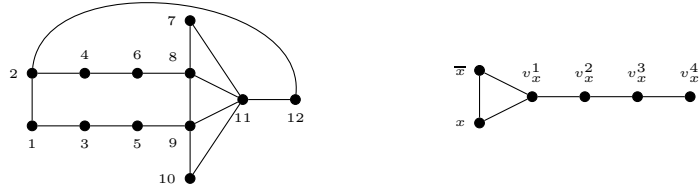
\begin{figure}[htb]
\centering
\begin{minipage}{0.45\linewidth}
\begin{subfigure}{1\linewidth}
\centering
\begin{tikzpicture}[node distance=.7cm]
\node[circ,label=below:{\tiny $1$}] (1) at (0,0) {};
\node[circ,label=left:{\tiny $2$},above of=1] (2) {};
\node[circ,label=below:{\tiny $3$},right of=1] (3) {};
\node[circ,label=below:{\tiny $5$},right of=3] (5) {};
\node[circ,label=below left:{\tiny $9$},right of=5] (9) {};
\node[circ,label=above:{\tiny $4$},right of=2] (4) {};
\node[circ,label=above:{\tiny $6$},right of=4] (6) {};
\node[circ,label=above left:{\tiny $8$},right of=6] (8) {};
\node[circ,label=left:{\tiny $7$},above of=8] (7) {};
\node[circ,label=left:{\tiny $10$},below of=9] (10) {};
\node[circ,label=below:{\tiny $11$}] (11) at (2.8,.35) {};
\node[circ,label=below:{\tiny $12$},right of=11] (12) {};

\draw[-] (1) -- (2)
(1) -- (3)
(2) -- (4)
(3) -- (5)
(4) -- (6)
(5) -- (9)
(6) -- (8)
(7) -- (8) 
(7) -- (11)
(8) -- (9)
(8) -- (11)
(9) -- (10)
(9) -- (11)
(10) -- (11)
(11) -- (12);
\draw (2) edge[bend left=90] (12);
\end{tikzpicture}
\caption{The graph $H$.}
\label{fig:basegad}
\end{subfigure}
\end{minipage}
\begin{minipage}{.45\linewidth}
\begin{subfigure}{1\linewidth}
\centering
\begin{tikzpicture}[node distance=.7cm]
\node[circ,label=left:{\tiny $x$}] (x) at (0,0) {};
\node[circ,label=left:{\tiny $\overline{x}$},above of=x] (xbar) {};
\node[circ,label=above:{\tiny $v_x^1$}] (v1) at (.7,.35) {};
\node[circ,label=above:{\tiny $v_x^2$},right of=v1] (v2) {};
\node[circ,label=above:{\tiny $v_x^3$},right of=v2] (v3) {};
\node[circ,label=above:{\tiny $v_x^4$},right of=v3] (v4) {};

\draw (x) -- (xbar)
(x) -- (v1)
(xbar) -- (v1)
(v1) -- (v2) 
(v2) -- (v3)
(v3) -- (v4);

\node[draw=none] at (0,2) {};
\node[draw=none] at (0,-.8) {};
\end{tikzpicture}
\caption{The gadget $G_x$ with $x \in X$.}
\label{fig:vargad}
\end{subfigure}
\end{minipage}
\caption{Construction of the graph $G_{\Phi}$.}
\end{figure}

\begin{Claim}
\label{clm:H}
$\gamma (H) = \gamma (H - \{1,3\}) =3$ and $ct_{\gamma} (H) = 2$.
\end{Claim}

Since $\{3,4,11\}$ (resp. $\{4,5,11\}$) is a dominating set of $H$ (resp. $H- \{1,3\}$), it follows that $\gamma (H) \leq 3$ and $\gamma (H - \{1,3\}) \leq 3$. On the other hand, any dominating set of $H$ must contain at least three vertices as $\{3,4,11\}$ is a stable set with $N(3) \cap N(4) = N(3) \cap N(11) = N(4) \cap N(11) = \emptyset$. Similarly, any dominating set of $H - \{1,3\}$ must contain at least three vertices as $\{4,5,7\}$ is a stable set with $N(4) \cap N(5) = N(4) \cap N(11) = N(5) \cap N(11) = \emptyset$. Thus, $\gamma (H) = \gamma (H - \{1,3\}) = 3$.

We now claim that $H$ has a unique minimum dominating set, namely $\{3,4,11\}$. First observe that any minimum dominating $D$ set of $H$ contains vertex 11 as otherwise $D$ would have to contain at least two vertices from $\{7,8,9,10\}$ in order to dominate vertices $7$ and $10$, and at least two other vertices to dominate vertices $3$ and $4$; but then, $\vert D \vert \geq 4 > \gamma (H)$. Now if there exists a minimum dominating set $D$ not containing vertex $4$, then $\{2,6\} \cap D \neq \emptyset$ as vertex $4$ is dominated. But if $2 \in D$ then $\{6,8\} \cap D \neq \emptyset$ as $6$ must be dominated; and so, $\vert D \vert \geq 4$ as $11 \in D$ and $\{1,3,5\} \cap D \neq \emptyset$ ($3$ must be dominated). Otherwise, $6 \in S$ and similarly $\{2,12\} \cap D \neq \emptyset$ as $2$ must be dominated; and we conclude similarly that $\vert D \vert \geq 4$. Thus, every minimum dominating set contains vertex $4$; we conclude similarly that every minimum dominating set contains vertex $3$. It follows that $\{3,4,11\}$ is the only minimum dominating set of $H$ and since it is stable, we obtain that $ct_{\gamma} (H) > 1$. Now, $\{1,2,8,9\}$ is clearly dominating and since it contains two edges, it follows that $ct_{\gamma} (H) = 2$ (see Theorem \ref{theorem:contracdom}(ii)). This completes the proof of the claim.

We next prove two claims which together show that $\Phi$ is satisfiable if and only if $ct_{\gamma} (G_{\Phi}) =~2$.

\begin{Claim}
\label{clm:ct2}
$\gamma (G_{\Phi}) = 2 \vert X \vert + 3$ if and only if $ct_{\gamma} (G_{\Phi}) = 2$.
\end{Claim}

Suppose that $\gamma (G_{\Phi}) = 2 \vert X \vert + 3$ and let $D$ be a minimum dominating set of $G_{\Phi}$. Since for any $x \in X$, vertices $v_x^1$ and $v_x^4$ can only be dominated by (distinct) vertices in $V(G_x)$, it follows that $\vert D \cap V(G_x) \vert \geq 2$. Furthermore, $\vert D \cap V(H) \vert \geq 3$ as $\gamma (H) = 3$ by Claim~\ref{clm:H} and even if vertices $1$ and $3$ are dominated by some transmitter vertex, we still have $\gamma (H - \{1,3\}) = 3$ by Claim \ref{clm:H}. Now, since $\vert D \vert = 2 \vert X \vert + 3$ we have that:
\begin{itemize}
\item[$\cdot$] $\forall x \in X$, $\vert D \cap V(G_x) \vert =2$;
\item[$\cdot$] $\vert D \cap V(H) \vert =3$;
\item[$\cdot$] $\forall c \in C$, $D \cap V(G_c) = \emptyset$.
\end{itemize} 

\noindent
But then, for any $x \in X$, the set $D \cap V(G_x)$ is a minimum dominating set of $G_x$ and therefore stable as we trivially have $ct_{\gamma} (G_x) = 2$. Similarly, $D \cap V(H)$ is a dominating set of $H$ (recall that for any $c \in C$, $D \cap V(G_c) = \emptyset$) and therefore stable as $ct_{\gamma} (H) = 2$ by Claim \ref{clm:H}. Thus, $D$ is stable and since $(D \cap \bigcup_{x \in x} V(G_x)) \cup \{1,2,8,9\}$ is a dominating set of $G_{\Phi}$ of size $\gamma(G_{\Phi}) + 1$ containing two edges, it follows that $ct_{\gamma} (G_{\Phi}) = 2$.\\

Conversely, assume that $ct_{\gamma} (G_{\Phi}) = 2$ and let $D$ be a minimum dominating set of $G_{\Phi}$ (note that $D$ is stable). Suppose that there exists $c \in C$ such that $D \cap V(G_c) \neq \emptyset$. Then, we may assume that $t_c \in D$; indeed, if $c \in D$ then no literal vertex adjacent to $c$ is in the dominating set as $D$ is stable. We then claim that any literal vertex adjacent to $c$ must dominated by one of its neighbor in the gadget; if $x$ (or $\overline{x}$) is adjacent to $c$ and neither $v_x^1$ nor $\overline{x}$ belongs to $D$, then we necessarily have $\vert D \cap \{v_x^2,v_x^3,v_x^4\} \vert = 2$ and so, $G_{\Phi}$ would have a minimum dominating set which is not stable, namely $(D \backslash \{v_x^1,v_x^2,v_x^3,v_x^4\}) \cup \{v_x^2,v_x^3\}$, a contradiction. But then, $(D \backslash \{c\}) \cup \{t_c\}$ is a minimum dominating set of $G_{\Phi}$. Now since $t_c \in D$, it follows that $D \cap \{1,3\} = \emptyset$ as $D$ is stable, which implies that $\{c',t_{c'}\} \cap D \neq \emptyset$ for any $c' \in C$. In particular, the set $D' = (D \backslash \{t_{c'}, c' \neq c\}) \cup \{c', c' \neq c\}$ is a minimum dominating set of $G_{\Phi}$ and thus, stable. But $\vert X \vert \geq 4$ so there must exist $x \in X$ such that both $x$ and $\overline{x}$ are dominated in $D'$ by some clause vertices (take any variable $x$ not occuring in $c$). In particular, $\{x,\overline{x}\} \cap D' = \emptyset$ which implies that $\vert D' \cap \{v_x^1,v_x^2,v_x^3,v_x^4\} \vert = 2$; but then $(D' \backslash \{v_x^1,v_x^2,v_x^3,v_x^4\}) \cup \{v_x^2,v_x^3\}$ is a minimum dominating set of $G_{\Phi}$ which is not stable, a contradiction. It follows that for any $c \in C$, $D \cap V(G_c) = \emptyset$.

On the other hand, if there exists $x \in X$ such that $\vert D \cap V(G_x) \vert > 2$, it is not difficult to see that $D$ could then be transformed into a minimum dominating set which is not stable. But since for any $x \in X$, at least two vertices are required to dominate $\{v_x^1,v_x^2,v_x^3,v_x^4\}$, we have then that $\vert D \cap V(G_x) \vert = 2$. Finally, as $D \cap V(H)$ is a minimum dominating set of $H$ (recall that $D \cap V(G_c) = \emptyset$ and so, no vertex in $(V(G_{\Phi}) \backslash V(H)) \cap D$ dominates a vertex in $H$), $\vert D \cap V(H) \vert  = \gamma (H) = 3$. Thus, $\gamma (G_{\Phi}) = 2 \vert X \vert + 3$, which concludes the proof of the claim.

\begin{Claim}
\label{clm:phisat}
$\gamma (G_{\Phi}) = 2 \vert X \vert + 3$ if and only if $\Phi$ is satisfiable.
\end{Claim}

Assume first that $\gamma (G_{\Phi}) = 2 \vert X \vert + 3$ and consider a minimum dominating set $D$ of $G$. As shown in the proof of Claim \ref{clm:ct2}, $D$ is then stable and contains no vertex from $\bigcup_{c \in C} V(G_c)$. Therefore, any clause vertex is dominated by a literal vertex and for any $x \in X$, $ \vert D \cap \{x,\overline{x}\} \vert \leq 1$. We may thus construct a truth assignment which satisfies $\Phi$ as follows.
\begin{itemize}
\item[$\cdot$] If $x \in D$, set variable $x$ to true;
\item[$\cdot$] if $\overline{x} \in D$, set variable $x$ to false;
\item[$\cdot$] otherwise, we may set variable $x$ to any truth value.
\end{itemize}

Conversely, assume that $\Phi$ is satisfiable and consider a truth assignment which satisfies $\Phi$. We construct a dominating set $D$ of $G_{\Phi}$ as follows. For any $x \in X$, if $x$ is set to true, we add $x$ and $v_x^3$ to $D$, otherwise we add $\overline{x}$ and $v_x^3$ to $D$. We further add vertices $3$, $4$ and $11$ of $H$. Then, it is not difficult to see that $D$ is dominating (every transmitter vertex is dominated by vertex $3$ and every clause vertex has an adjacent literal vertex belonging to $D$) and so, $\gamma (G_{\Phi}) \leq 2 \vert X \vert + 3$. But since for any $x \in X$, $\vert D \cap V(G_x) \vert \geq 2$ and for any $c \in C$, $\vert D \cap V(G_C) \vert \geq 4$, it follows that $\gamma (G_{\Phi}) = 2 \vert X \vert + 3$. This completes the proof of the claim.

Now combining Claims \ref{clm:ct2} and \ref{clm:phisat}, we have that $\Phi$ is satisfiable if and only if $ct_{\gamma} (G_{\Phi}) = 2$ which concludes the proof of Theorem \ref{thm:gamma2}.

\section{The end of the proof of Theorem \ref{theorem:1econtracd}}
\label{section:P9}

We here prove that if $G$ is $2K_2$-free then $G'$ is $P_9$-free; since \dom{} is $\mathsf{NP}$-hard in $2K_2$-free graphs \cite{splitdom}, this would complete the proof of the theorem. Suppose to the contrary that $G'$ contains an induced path of length $9$ and consider such a path $P=z_1 - \ldots - z_9$. Observe first that there exist no $p,q \in \{1,\ldots, 8\}$ with $p + 1 < q-1$ and $i,j \in \{1,\ldots ,\ell\}$ such that $z_p,z_{q+1} \in V(G_i)$ and $z_{p+1},z_q \in V(G_j)$ as $G$ would otherwise contain a $2K_2$, namely $x,z,u$ and $v$ where $z_p,z_{p+1}, z_q$ and $z_{q+1}$ is the copy of $x,z,u$ and $v$, respectively. We now claim the following.

\begin{observation}
\label{clm:2k2}
If there exist $1 \leq i,j,k,l \leq \ell,$ with possibly $j=k$, such that $z_p \in V(G_i)$, $z_{p+1}\in V(G_j)$, $z_q \in V(G_k)$ and $z_{q+1} \in V(G_l)$ for some $p,q \in \{1,\ldots ,8\}$ with $p+1 < q-1$, then $P$ is not induced (see Fig. \ref{fig:notind}).
\end{observation}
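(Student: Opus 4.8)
The plan is to run the same $2K_2$-extraction argument as in the (unnumbered) observation just made in the proof, but now exploiting the fact that all four of $z_p,z_{p+1},z_q,z_{q+1}$ lie inside $\bigcup_{1\leq r\leq \ell}V(G_r)$, so that adjacency between any two of them is governed purely by the false-twin structure of the copies $G_1,\ldots,G_\ell$. The first step is therefore to record a clean description of the induced subgraph $H^{\star}:=G'\big[\bigcup_{1\leq r\leq \ell}V(G_r)\big]$: for all $i,j\in\{1,\ldots,\ell\}$ and all vertices $v_k,v_m$ of $G$, the copies $v_k^i$ and $v_m^j$ are adjacent in $G'$ if and only if $k\neq m$ and $v_kv_m\in E(G)$. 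This is immediate: inside a single $G_r$ it is the definition of a copy of $G$, and for $i\neq j$ it follows from $v_m^i$ and $v_m^j$ being false twins (hence in particular non-adjacent), so that $v_m^j$ has exactly the neighbours of $v_m^i$ in $H^{\star}$. Phrasing it this way makes the copy-index irrelevant, which is precisely why coincidences among $i,j,k,l$ (notably the allowed case $j=k$) will need no separate treatment.

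Next I would name the vertices involved: let $a,b,c,d$ be the vertices of $G$ whose copies in $G_i,G_j,G_k,G_l$ are $z_p,z_{p+1},z_q,z_{q+1}$, respectively. Plugging the two path-edges $z_pz_{p+1}$ and $z_qz_{q+1}$ into the description of $H^{\star}$ yields $a\neq b$, $ab\in E(G)$, $c\neq d$ and $cd\in E(G)$. On the other hand, since $p+1<q-1$, the indices $p,p+1,q,q+1$ are pairwise at distance at least $2$ except for the two consecutive pairs, so, $P$ being induced, each of $\{z_p,z_q\}$, $\{z_p,z_{q+1}\}$, $\{z_{p+1},z_q\}$ and $\{z_{p+1},z_{q+1}\}$ is a non-edge of $G'$; plugging these into the description of $H^{\star}$ gives, for each pair $(\mu,\nu)\in\{(a,c),(a,d),(b,c),(b,d)\}$, that $\mu=\nu$ or $\mu\nu\notin E(G)$.

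The argument then closes with a short case distinction. If $a,b,c,d$ are pairwise distinct, the last assertion forces $ac,ad,bc,bd\notin E(G)$, so $\{a,b\}\cup\{c,d\}$ induces a $2K_2$ in $G$, contradicting that $G$ is $2K_2$-free. Otherwise $\{a,b\}\cap\{c,d\}\neq\emptyset$, and since $a\neq b$ and $c\neq d$ this means that exactly one of $a=c$, $a=d$, $b=c$, $b=d$ holds (up to possibly both diagonal identifications if $\{a,b\}=\{c,d\}$); in each of these four situations one of the four pairs $(\mu,\nu)$ above equals $(b,a)$ or $(a,b)$, whose two entries are distinct because $a\neq b$, so the alternative $\mu=\nu$ is excluded and we obtain $ab\notin E(G)$, contradicting $ab\in E(G)$. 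In either case the supposed path $P$ cannot be induced, which is exactly the statement.

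I do not anticipate a genuine obstacle: once the adjacency description of $H^{\star}$ is in place, the remainder is routine bookkeeping that mirrors the $2K_2$-extraction already carried out in the text. The one point I would be careful to state precisely is that description itself — it must be formulated uniformly for all choices of the copy-indices (equal or not), since the entire content of the claim, and in particular the reason the case $j=k$ costs nothing, lives there.
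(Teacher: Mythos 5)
Your proposal is correct and follows essentially the same route as the paper: the paper likewise uses the false-twin structure to show that $z_p,z_{p+1},z_q,z_{q+1}$ must be copies of four distinct vertices of $G$ (ruling out coincidences via the adjacency/non-adjacency pattern forced by $P$ being induced) and then reads off a $2K_2$ in $G$, contradicting $2K_2$-freeness. Your explicit adjacency description of $G'\bigl[\bigcup_{1\leq r\leq \ell}V(G_r)\bigr]$ and the subsequent case analysis in $G$ are just a slightly more systematic packaging of the same argument.
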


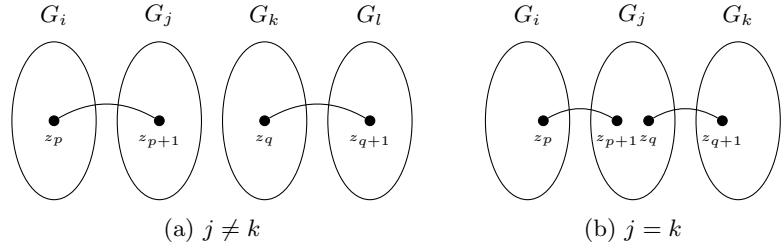
\begin{figure}[htb]
\centering
\begin{minipage}{.45\linewidth}
\begin{subfigure}{1\linewidth}
\centering
\begin{tikzpicture}[scale=.7]
\draw (0,0) ellipse (.8cm and 1.5cm);
\node[draw=none] at (0,2) {\small $G_i$};
\node[circ,label=below:{\tiny $z_p$}] (p) at (0,0) {};

\draw (2,0) ellipse (.8cm and 1.5cm);
\node[draw=none] at (2,2) {\small $G_j$};
\node[circ,label=below:{\tiny $z_{p+1}$}] (p1) at (2,0) {};

\draw (4,0) ellipse (.8cm and 1.5cm);
\node[draw=none] at (4,2) {\small $G_k$};
\node[circ,label=below:{\tiny $z_q$}] (q) at (4,0) {};

\draw (6,0) ellipse (.8cm and 1.5cm);
\node[draw=none] at (6,2) {\small $G_l$};
\node[circ,label=below:{\tiny $z_{q+1}$}] (q1) at (6,0) {};

\draw (p) edge[bend left] (p1)
(q) edge[bend left] (q1);
\end{tikzpicture}
\caption{$j \neq k$}
\end{subfigure}
\end{minipage}
\begin{minipage}{.45\linewidth}
\begin{subfigure}{1\linewidth}
\centering
\begin{tikzpicture}[scale=.7]
\draw (0,0) ellipse (.8cm and 1.5cm);
\node[draw=none] at (0,2) {\small $G_i$};
\node[circ,label=below:{\tiny $z_p$}] (p) at (.3,0) {};

\draw (2,0) ellipse (.8cm and 1.5cm);
\node[draw=none] at (2,2) {\small $G_j$};
\node[circ,label=below:{\tiny $z_{p+1}$}] (p1) at (1.7,0) {};
\node[circ,label=below:{\tiny $z_q$}] (q) at (2.3,0) {};

\draw (4,0) ellipse (.8cm and 1.5cm);
\node[draw=none] at (4,2) {\small $G_k$};
\node[circ,label=below:{\tiny $z_{q+1}$}] (q1) at (3.7,0) {};

\draw (p) edge[bend left] (p1)
(q) edge[bend left] (q1);
\end{tikzpicture}
\caption{$j = k$}
\end{subfigure}
\end{minipage}
\caption{Forbidden configurations for $P$.}
\label{fig:notind}
\end{figure}

Indeed, suppose that such indices exist. Then, since $z_pz_{p+1} \in E(G')$, $z_p$ and $z_{p+1}$ cannot be copies of the same vertex in $G$ (recall that copies in $G_i$ and $G_j$ of a given vertex are false twins within the graph induced by $V(G_i) \cup V(G_j)$); and the same holds for $z_q$ and $z_{q+1}$. Now, since $z_qz_{q+1} \in E(G')$ and $z_p$ is nonadjacent to $z_{q+1}$ (resp. $z_q$), it follows that $z_p$ and $z_q$ (resp. $z_{q+1}$) cannot be copies of the same vertex in $G$; we conclude similarly that $z_{p+1}$ and $z_q$ (resp. $z_{q+1}$) are not copies of the same vertex in $G$. Thus, $z_p$, $z_{p+1}$, $z_q$ and $z_{q+1}$ are copies of four distinct vertices in $G$, say $x$, $z$, $u$ and $v$ respectively. But then, $xz,uv \in E(G)$ with $x,z \not \in N(u) \cup N(v)$ as $P$ is induced and so, $x,z,u$ and $v$ induce a $2K_2$ in $G$, a contradiction.\\

We next prove the following observation. Note that it is symmetric in the sense that if there exists $5 < p \leq 9$ such that for any $q \leq p$, $z_q \not \in \{x_1,\ldots,x_{\ell + 1}\} \cup \{y\}$, then $G$ contains a $2K_2$. 

\begin{observation}
\label{clm:p5}
If there exists $1 \leq p \leq 5$ such that for any $q \geq p$, $z_q \not \in \{x_1,\ldots,x_{\ell + 1}\} \cup \{y\}$, then $G$ contains a $2K_2$.
\end{observation}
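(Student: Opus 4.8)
\emph{Proof proposal.} The plan is to reduce, using the hypothesis, to the study of an induced $P_5$ lying entirely inside $V$, and then to project that path down to $G$ via the twin structure of $G'$, exhibiting a $2K_2$ of $G$ as the projection of its two end-edges. The hypothesis (with $p\le 5$) guarantees in particular that $z_5,z_6,z_7,z_8,z_9\in V$; write $w_i:=z_{4+i}$ for $1\le i\le 5$, so that $w_1-w_2-w_3-w_4-w_5$ is an induced $P_5$ in $G'[V]$. Let $\phi\colon V\to V(G)$ be the projection sending each $v_k^i$ to $v_k$. The structural fact on which everything rests, immediate from the construction of $G'$ (the copies $v_k^0$ and $v_k^i$ are true twins, the copies $v_k^i$ and $v_k^j$ are false twins), is the following adjacency rule in $G'[V]$: for distinct $a,b\in V$, $a$ and $b$ are adjacent if and only if either $\phi(a)\phi(b)\in E(G)$, or $\phi(a)=\phi(b)$ and exactly one of $a,b$ lies in $V(G_0)$. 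In particular, \emph{every} copy of a vertex $v$ of $G$ is adjacent in $G'[V]$ to \emph{every} copy of every neighbour of $v$.

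It then suffices to prove that $\phi(w_1),\ldots,\phi(w_5)$ are five pairwise distinct vertices of $G$: they automatically induce a $P_5$, since by the adjacency rule $w_i\sim_{G'}w_{i+1}$ with $\phi(w_i)\neq\phi(w_{i+1})$ forces $\phi(w_i)\phi(w_{i+1})\in E(G)$, and $w_i\not\sim_{G'}w_j$ with $\phi(w_i)\neq\phi(w_j)$ forces $\phi(w_i)\phi(w_j)\notin E(G)$; and a $P_5$ contains an induced $2K_2$. So the whole content is the distinctness of the projections, which I would prove by contradiction.

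Suppose $\phi(w_a)=\phi(w_b)=:v$ with $a<b$. Look at $w_{a+1}$: its projection is either a neighbour of $v$ in $G$, or $v$ itself. In the first case $w_{a+1}$ is a copy of a neighbour of $v$, hence (by the key observation) adjacent in $G'$ to the copy $w_a$ of $v$ and to the copy $w_b$ of $v$; if $w_{a+1}$ is at path-distance at least $2$ from $w_a$ or from $w_b$ this contradicts that $P$ is induced, and in the one remaining configuration ($b=a+2$) one uses instead one further vertex of $P$ — either $w_{a-1}$ or $w_{a+3}$, which exists because $P$ has five vertices — whose projection is again forced to be $v$ or a neighbour of $v$, and the same argument yields either a forbidden adjacency or two equal vertices of $P$. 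In the second case one iterates: if all of $w_a,\ldots,w_b$ have projection $v$ they are copies of $v$, say $w_t=v^{i_t}$, and since consecutive copies of $v$ are adjacent only when exactly one has upper index $0$, the indices $i_a,\ldots,i_b$ alternate between $0$ and nonzero values; as $v^0$ is unique this forces $b-a\le 2$, and a short check (again borrowing the edge of $P$ that leaves this stretch towards a copy of a neighbour of $v$) produces either two equal vertices of $P$ or, via the key observation, a forbidden adjacency. The whole argument is symmetric under reversal of $P$, which covers the mirror configurations.

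The main obstacle is precisely the bookkeeping in this last case analysis around the vertices of $V(G_0)$: these are exactly the vertices that Observation~\ref{clm:2k2} does not handle, since a vertex $v^0\in V(G_0)$ is a true twin of every $v^i$ and therefore has a far larger neighbourhood in $G'[V]$ (it sees all copies of $v$ and all copies of the neighbours of $v$) than a vertex of $G_i$ with $i\ge 1$. Once the adjacency rule and the ``every copy of $v$ sees every copy of a neighbour of $v$'' observation are set up, the rest of the proof is essentially forced.
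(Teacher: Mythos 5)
Your proposal is correct and follows essentially the same route as the paper's proof: project the tail of the path into $G$, show the projections are pairwise distinct because any coincidence of copies would, via the twin structure, force a chord of $P$, and then extract a $2K_2$ from the resulting induced path on at least five vertices. Your write-up is in fact slightly more careful than the paper's in that it explicitly separates the true-twin case (a copy lying in $G_0$) from the false-twin case, which the paper glosses over by invoking ``false twins'' throughout.
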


We show that if such an index exists, then $z_p \ldots z_9$ corresponds in $G$ to an induced path of length at least $9 - p+ 1 \geq 5$ and hence, $G$ contains a $2K_2$. To this end, note that it is sufficient to show that no two $z_i$ and $z_j$, with $i,j \in \{p,\ldots,9\}$, are copies of the same vertex in $G$; indeed, if $z_i$ (resp. $z_j$) is the copie of $v_i \in V(G)$ (resp. $v_j \in V(G)$) with $v_i \neq v_j$ for any $i,j \in \{p,\ldots,9\}$, then $v_p\ldots v_9$ induces a path of length at least 5 in $G$. So suppose to the contrary that there exists $i,j \in \{p,\ldots,9\}$ with $i < j$, such that $z_i$ and $z_j$ are copies of the same vertex in $G$. Then, either both have only one neighbor in $\{z_p,\ldots,z_9\}$ in which case $i=p$, $j=9$ and $z_i z_8 \in E(G')$ since $z_i$ and $z_j$ are false twins, and so, $P$ is not induced. Or, at least one among $z_i$ and $z_j$ has two neighbors in $\{z_p,\ldots,z_9\}$: if it is $z_i$ then $i > p$ and $z_jz_{i-1} \in E(G')$ since $z_i$ and $z_j$ are false twins and so, $P$ is not induced; otherwise $j < 9$ and $z_iz_{j+1} \in E(G')$ since $z_i$ and $z_j$ are false and again, $P$ is not induced. We therefore conclude that for any $i,j \in \{p,\ldots, 9\}$, $z_i$ and $z_j$ are not copies of the same vertex in $G$.\\

We now claim that $P$ contains exactly two vertices from $\{x_1,\ldots,x_{\ell + 1}\}$. Indeed, if $P$ contained $x_i$, $x_j$ and $x_k$ in this order, with $1 \leq i,j,k \leq \ell + 1$, then $P$ would contain no vertex from $G_0$ for otherwise $P$ would not be induced (recall that $x_i$, $x_j$ and $x_k$ are adjacent to every vertex in $G_0$). But then, there exists $p,q \in \{1,\ldots , 8\}$ such that $z_p \in V(G_i)$, $z_{p+1},z_q \in V(G_j)$ and $z_{q+1} \in V(G_k)$ and we conclude to a contradiction by Observation \ref{clm:2k2}. On the other hand, if $P$ contains no vertex from  $\{x_1,\ldots,x_{\ell + 1}\}$, then it does not contain $y$ either (otherwise $P$ would contain $x_{\ell + 1}$ as it is the only neighbor of $y$) and we conclude by Observation \ref{clm:p5} that $G$ contains a $2K_2$. Finally, if $P$ contains exactly one vertex $z_p$ ($1 \leq p \leq 9$) from $\{x_1,\ldots,x_{\ell + 1}\}$ then, either $p \leq 5$ in which case it is clear that $z_q \neq y$ for any $q \geq p$ and we conclude by Observation \ref{clm:p5}; or $p > 5$ and since $z_q \neq y$ for any $q \leq p$, we also conclude by Observation \ref{clm:p5}. It follows that $P$ must contain exactly two vertices from $\{x_1,\ldots, x_{\ell + 1}\}$, say $x_i$ and $x_j$ in this order.

Now, if the subpath $P'$ of $P$ from $x_i$ to $x_j$ intersects $G_0$ then $P'$ has length 3 (recall that $x_i$ and $x_j$ are adjacent to every vertex in $G_0$) and if $i \leq \ell$ (resp. $j \leq \ell$), $P$ contains at most one vertex from $G_i$ (resp. $G_j$) as $x_i$ and $x_j$ have a neighbor in $P$ belonging to $G_0$. But since $P$ has length 9, it follows $P$ contains exactly one vertex from at least one of $G_i$ an $G_j$, say $z_p \in G_j$ without loss of generality, and there must exist $1 \leq k \leq \ell$ with $k \neq i,j$ such that $ z_{p+1} \in G_k$ (see Fig. \ref{fig:2x}). Now if $p > 5$ then $x_i = z_q$ for some $q > 2$ and since $P$ contains at most one vertex from $G_i$, it follows that there exists $l \in \{1, \ldots, \ell \}$ with $l \neq i,j$, such that $z_{q-2} \in V(G_l)$. Using $z_{q-2},z_{q-1},z_p$ and $z_{p+1}$, we then conclude by Observation \ref{clm:2k2} that $P$ is not induced. Thus, $p \leq 5$ but then, it follows from Observation \ref{clm:p5} that $G$ contains a $2K_2$. Hence, $P'$ contains no vertex from $G_0$. 

\begin{figure}[htb]
\centering
\begin{tikzpicture}[scale=.7]
\node[circ,label=left:{\tiny $x_i$}] (xi) at (0,-2.5) {};

\draw (2,0) ellipse (.8cm and 1.5cm);
\node[draw=none] at (2,2) {\small $G_j$};
\node[circ,label=right:{\tiny $x_j$}] (xj) at (2,-2.5) {};
\node[circ,label=above:{\tiny $z_p$}] (zp) at (2,0) {};

\draw (4,0) ellipse (.8cm and 1.5cm);
\node[draw=none] at (4,2) {\small $G_k$};
\node[circ,label=above:{\tiny $z_{p+1}$}] (zp1) at (4,0) {};

\node[circ] (g0) at (1,-3.5) {};

\draw[dashed] (-1,-3) -- (5,-3) node[label=below:{\small $G_0$}] {};

\draw (zp) edge[bend left] (zp1);
\draw (xj) -- (zp)
(xi) -- (g0)
(g0) -- (xj);

\draw[thick] (xj) -- (1.43,-1.06)
(xj) -- (2.57,-1.06);
\end{tikzpicture}
\caption{$P$ contains two vertices from $\{x_1, \ldots, x_{\ell + 1}\}$.}
\label{fig:2x}
\end{figure}
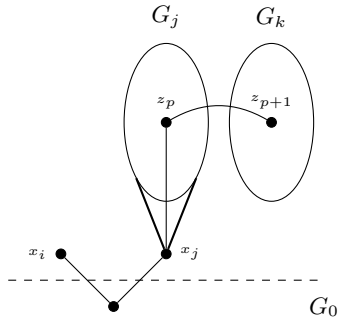 

Then, $x_i = z_q$ for some $q \in \{1,\ldots, 9\}$, and either $q > 2$ in which there exist $k,l \in \{1,\ldots, \ell\}$ with $k,l \neq i$, such that $z_{q-2} \in V(G_k)$ and $z_{q+2} \in V(G_l)$ as $P$ then contains exactly two vertices from $G_i$ (note that it might be that $l =j$, but necessarily, $l \neq k$); we then conclude by Observation \ref{clm:2k2} that $G$ contains a $2K_2$. Or $q \leq 2$ which implies that $z_{q+2} \in V(G_k)$ for some $k \neq i$. Then, $x_j = z_p$ and either $p < 8$ in which case there exists $1 \leq l \leq \ell$ with $l \neq j$, such that $z_{p+2} \in V(G_l)$; we then conclude by Observation \ref{clm:2k2} by considering $z_{q+1} \in V(G_i)$, $z_{q+2} \in V(G_k)$, $z_{p+1} \in V(G_j)$ and $z_{p+2} \in V(G_l)$ (note that it might be that $k = l$ or $k=j$). Or $p \geq 8$ and there exists $1 \leq l \leq \ell$ with $l \neq j$, such that $z_{p-2} \in V(G_l)$; but again, we conclude by Observation \ref{clm:2k2} by considering $z_{q+1} \in V(G_i)$, $z_{q+2} \in V(G_k)$, $z_{p-1} \in V(G_j)$ and $z_{p-2} \in V(G_l)$ (note that it might be that $k=l$ or $k=j$). Thus, it follows that no such path exists i.e., $G'$ is $P_9$-free which concludes the proof.

\section{The proof of Lemma \ref{lemma:3sub}}
\label{section:3sub}

Let $G=(V,E)$ be a graph. In the following, given an edge $e \in E$, we denote by $e_1$, $e_2$ and $e_3$ the three new vertices resulting from the 3-subdivision of the edge $e$. We first prove the following.

\begin{Claim}
\label{claim:3sub}
If $H$ is the graph obtained from $G$ by 3-subdividing one edge, then $\gamma (H) = \gamma (G) + 1$.
\end{Claim}

Assume that $H$ is obtained by 3-subdividing the edge $e=uv$ (we assume in the following that $e_1$ is adjacent to $u$ and $e_3$ is adjacent to $v$ in $H$), and consider a minimum dominating set $D$ of $G$. We construct a dominating set of $H$ as follows. If $D \cap \{u,v\} = \emptyset$, then $D \cup \{e_2\}$ is a dominating set of $H$. If $|D \cap \{u,v\}| = 1$, then we may assume without loss of generality that $u \in D$; but then, $D \cup \{e_3\}$ is a dominating set of $H$. Finally, if $\{u,v\} \subset D$, then $D \cup \{e_1\}$ is a dominating set of $H$. Thus, $\gamma (H) \leq \gamma (G) + 1$.

\begin{figure}[htb]
\centering
\begin{tikzpicture}[node distance=.5cm,scale=.7]
\node[circ,label=above:{\tiny $u$}] (u1) at (0,0) {};
\node[circ,label=above:{\tiny $v$}] (v1) at (1.5,0) {};

\draw[-Implies,line width=.6pt,double distance=2pt] (2.5,0) -- (3.5,0);

\node[circ,label=above:{\tiny $u$}] (u'1) at (4.5,0) {};
\node[circ,right of=u'1,label=above:{\tiny $e_1$}] (e1) {}; 
\node[circr,right of=e1,label=above:{\tiny $e_2$}] (e2) {};
\node[circ,right of=e2,label=above:{\tiny $e_3$}] (e3) {};
\node[circ,right of=e3,label=above:{\tiny $v$}] (v'1) {};

\draw[-] (u1) -- (v1) node[above,midway] {\tiny $e$}
(u'1) -- (e1)
(e1) -- (e2)
(e2) -- (e3)
(e3) -- (v'1);

\node[circr,label=above:{\tiny $u$}] (u11) at (0,-1) {};
\node[circ,label=above:{\tiny $v$}] (v11) at (1.5,-1) {};

\draw[-Implies,line width=.6pt,double distance=2pt] (2.5,-1) -- (3.5,-1);

\node[circr,label=above:{\tiny $u$}] (u'11) at (4.5,-1) {};
\node[circ,right of=u'11,label=above:{\tiny $e_1$}] (e11) {}; 
\node[circ,right of=e11,label=above:{\tiny $e_2$}] (e21) {};
\node[circr,right of=e21,label=above:{\tiny $e_3$}] (e31) {};
\node[circ,right of=e31,label=above:{\tiny $v$}] (v'11) {};

\draw[-] (u11) -- (v11) node[above,midway] {\tiny $e$}
(u'11) -- (e11)
(e11) -- (e21)
(e21) -- (e31)
(e31) -- (v'11);

\node[circr,label=above:{\tiny $u$}] (u12) at (0,-2) {};
\node[circr,label=above:{\tiny $v$}] (v12) at (1.5,-2) {};

\draw[-Implies,line width=.6pt,double distance=2pt] (2.5,-2) -- (3.5,-2);

\node[circr,label=above:{\tiny $u$}] (u'12) at (4.5,-2) {};
\node[circr,right of=u'12,label=above:{\tiny $e_1$}] (e12) {}; 
\node[circ,right of=e12,label=above:{\tiny $e_2$}] (e22) {};
\node[circ,right of=e22,label=above:{\tiny $e_3$}] (e32) {};
\node[circr,right of=e32,label=above:{\tiny $v$}] (v'12) {};

\draw[-] (u12) -- (v12) node[above,midway] {\tiny $e$}
(u'12) -- (e12)
(e12) -- (e22)
(e22) -- (e32)
(e32) -- (v'12);
\end{tikzpicture}
\caption{Constructing a dominating set of $H$ from the dominating set $D$ of $G$ (vertices in red belong to the corresponding dominating set).}
\label{fig:dom3sub}
\end{figure}
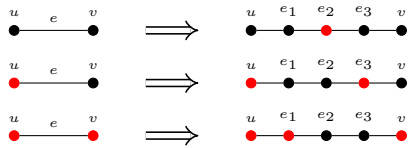

Conversely, let $D'$ be a minimum dominating set of $H$. First observe that at least one vertex among $e_1$, $e_2$ and $e_3$ belongs to $D'$ as $e_2$ must be dominated. Furthermore, we may assume, without loss of generality, that $\{e_1,e_3\} \not\subset D'$; indeed, if $\{e_1,e_3\} \subset D'$ then, by minimality of $D'$, $v \not\in D'$  for otherwise $D'\backslash \{e_3\}$ would be a dominating set of $G'$ of size strictly smaller than $D'$, a contradiction. But then, $(D'\backslash \{e_3\}) \cup \{v\}$ is a minimum dominating set of $G'$ not containing both $e_1$ and $e_3$. We next prove the following.

\begin{observation}
\label{obs:e1v}
If $e_1 \in D'$ (resp. $e_3 \in D'$) then $(D'\backslash \{e_1,e_2,e_3\}) \cup \{v\}$ (resp. $(D'\backslash \{e_1,e_2,e_3\}) \cup \{u\}$) is a dominating set of $G$ of size at most $\gamma (H) - 1$.
\end{observation}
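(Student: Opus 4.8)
The plan is to treat the two (symmetric) cases together. By the remark preceding the observation we may assume $\{e_1,e_3\}\not\subseteq D'$, so the hypothesis $e_1\in D'$ forces $e_3\notin D'$ (and, symmetrically, $e_3\in D'$ forces $e_1\notin D'$); I would run the argument for $e_1\in D'$ with $D:=(D'\backslash\{e_1,e_2,e_3\})\cup\{v\}$ and deduce the case $e_3\in D'$ by exchanging the roles of $u,v$ and of $e_1,e_3$. In each case there are two things to verify: that $D$ dominates $G$, and that $|D|\le|D'|-1=\gamma(H)-1$.

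For domination I would first record how neighbourhoods change under the $3$-subdivision: for every $w\in V(G)\backslash\{u,v\}$ we have $N_H[w]=N_G[w]$, a set contained in $V(G)$ and hence disjoint from $\{e_1,e_2,e_3\}$; moreover the only pair of vertices of $V(G)$ joined by an edge of $H$ that is not an edge of $G$ would be $\{u,v\}$, and $uv\notin E(H)$. Now take $w\in V(G)$ and a vertex $z\in D'$ dominating it in $H$. If $w\notin\{u,v\}$, then $z\in N_H[w]=N_G[w]\subseteq V(G)$, so $z\in D'\backslash\{e_1,e_2,e_3\}\subseteq D$ and $z$ still dominates $w$ in $G$. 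If $w=v$ we are done since $v\in D$. If $w=u$, the point is that $v\in D$ and $uv=e\in E(G)$, so $v$ dominates $u$ in $G$; adding $v$ is precisely what covers the only vertices of $V(G)$ (namely $u$ and $v$) that might have been dominated in $H$ only through $e_1,e_2,e_3$.

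For the size bound I would split on whether $e_2\in D'$. If $e_2\in D'$, then $|D'\cap\{e_1,e_2,e_3\}|=2$ (recall $e_3\notin D'$), so $|D|\le|D'|-2+1=\gamma(H)-1$. If $e_2\notin D'$, then $e_2,e_3\notin D'$ and, since $N_H(e_3)=\{e_2,v\}$, the vertex $e_3$ must be dominated by $v$; hence $v\in D'$ already, $D=D'\backslash\{e_1\}$, and $|D|=|D'|-1=\gamma(H)-1$.

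I expect the only delicate point to be the treatment of $u$ in the domination check: in $H$ it may be dominated solely by $e_1\in D'$, and after removing $e_1,e_2,e_3$ it would be stranded were it not for the newly inserted $v$, which reaches it along the original edge $e$. Everything else is routine neighbourhood bookkeeping, and the symmetric case requires no new idea.
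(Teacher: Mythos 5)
Your argument is correct and is essentially the paper's own proof: both hinge on the fact that $e_3$ must be dominated, forcing $e_2\in D'$ or $v\in D'$, so that removing $\{e_1,e_2,e_3\}$ and inserting $v$ drops the size by at least one (you case on $e_2\in D'$ where the paper cases on $v\in D'$, but the content is identical). Your explicit verification that the new set dominates $G$ — in particular that $u$ is recovered via the edge $uv$ — is a detail the paper leaves implicit, and it is handled correctly.
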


Indeed, if $e_1 \in D'$ then either $v \in D'$ and $(D'\backslash \{e_1,e_2,e_3\}) \cup \{v\} = D'\backslash \{e_1,e_2,e_3\}$ is a dominating set of $G$ of size at most $\gamma (H) - 1$. Or $v \not\in D'$ but then $e_2\in D'$ since $e_3 \not\in D'$ (recall that $|D' \cap \{e_1,e_3\}| \leq 1$) must be dominated. But again, $(D'\backslash \{e_1,e_2,e_3\}) \cup \{v\}$ is a dominating set of $G$ of size at most $\gamma (H) - 1$. By symmetry, we conclude similarly if $e_3 \in D'$.

On the other hand, if $\{e_1,e_3\} \cap D' = \emptyset$, then $e_2 \in D'$ and $D'\backslash \{e_1,e_2,e_3\}$ is a dominating set of $G$ of size $\gamma (H) - 1$, which concludes the proof of the claim. \\

We next prove the statement of the lemma. Let $G'$ be the graph obtained from $G$ by 3-subdividing every edge of $G$. It then follows from Claim \ref{claim:3sub} that $\gamma (G') = \gamma (G) + |E|$. 

First assume that $G$ is a \yes-instance for \contracd{} i.e., there exists a minimum dominating set $D$ of $G$ containing an edge $e=uv$ (see Theorem \ref{theorem:contracdom}(i)). Let $D'$ be the minimum dominating set of $G'$ constructed according to the proof of Claim \ref{claim:3sub}. Then by construction, the edge $ue_1$ is contained in $D'$ which implies that $G'$ is a \yes-instance for \contracd{}.

Conversely, assume that $G'$ is a \yes-instance for \contracd{} that is, there exists a minimum dominating set $D'$ of $G'$ containing an edge $f$ (see Theorem \ref{theorem:contracdom}(i)). First note that we may assume that for any edge $e=uv\in E$, $\{e_1,e_3\} \not\subset D'$; indeed, if $\{e_1,e_3\} \subset D'$, then by minimality of $D'$ we have that $v \not\in D'$ (with $v$ adjacent to $e_3$) for otherwise $D' \backslash \{e_3\}$ is a dominating set of $G'$ of size strictly smaller than $D'$, a contradiction (also note that by minimality of $D'$, $e_2 \not\in D'$). But then, $(D'\backslash \{e_3\}) \cup \{v\}$ is also a minimum dominating set of $G'$ containing the edge $f$; indeed, since both $e_2$ and $v$ are not contained in $D'$, $e_3$ is not an endvertex of $f$. In the following, we denote by $e=uv$ the edge of $G$ such that $f$ is an edge of the 3-subdivision of $e$, with $e_1$ adjacent to $u$ and $e_3$ adjacent to $v$. 

Now consider the minimum dominating set $D$ of $G$ constructed according to the proof of Claim \ref{claim:3sub}. We distinguish two cases depending on whether $f=ue_1$ or $f=e_1e_2$ (note that the cases where $f=e_3v$ or $f=e_2e_3$ are symmetric to those considered).  

First assume that $f=ue_1$. Then, by Observation \ref{obs:e1v}, $v \in S$ and thus, $uv$ is an edge contained in $D$. Now, if $f=e_1e_2$ then again, by Observation \ref{obs:e1v}, $v\in D$. But then, by minimality of $D'$, we know that $e_3 \not\in D'$ as well as $v \not\in D'$, for otherwise $D'\backslash \{e_2\}$ would be a dominating set of $G'$ of size strictly smaller than $D'$, a contradiction. Thus, $v$ is dominated in $G'$ by some vertex $e'_1$ with $e'=vw\in E$, and it follows from Observation \ref{obs:e1v} that $w \in D$. But then, $D$ contains the edge $vw$, which concludes the proof.

\section{The proof of Theorem \ref{thm:nop}}
\label{section:nop}

We formally define the following problem.

\begin{center}
\begin{boxedminipage}{.99\textwidth}
\textsc{Edge Contraction($\pi$)}\\[2pt]
\begin{tabular}{ r p{0.8\textwidth}}
\textit{~~~~Instance:} &A graph $G=(V,E)$ and an edge $e \in E$.\\
\textit{Question:} &Is $\gamma(G\backslash e) \leq \gamma(G) -1$?
\end{tabular}
\end{boxedminipage}
\end{center}

We show that if \edgecontracd{} can be solved in polynomial time, then \dom{} can also be solved in polynomial time. Since \dom{} is a well-known $\mathsf{NP}$-complete problem, the result follows. 

Let $(G,\ell)$ be an instance for \dom{} and let $e$ be an edge of $G$. We run the polynomial time algorithm for \edgecontracd{} to determine if $\gamma(G\setminus e) =\gamma(G)-1$; we then have two possible scenarios.\\

\noindent{\bf Case 1.} $(G,e)$ is a \yes-instance for \edgecontracd{}. Since $\gamma(G\setminus e) =\gamma(G)-1$, we know that $G$ has a dominating set of size $\ell$ if and only if $G\setminus e$ has a dominating set of size $\ell -1$. Hence, we obtain that $(G\setminus e, \ell-1)$ is an equivalent instance for \dom{}. \\

\noindent{\bf Case 2.} $(G,e)$ is a \no-instance for \edgecontracd{}. Since $\gamma(G\setminus e) =\gamma(G)$, we know that $G$ has a dominating set of size $\ell$ if and only if $G\setminus e$ has a dominating set of size $\ell$. In this case, we obtain that $(G\setminus e, \ell)$ is an equivalent instance for \dom{}. \\

In both cases, the ensuing equivalent instance has one less vertex. Thus, by applying the polynomial-time algorithm for \edgecontracd{} at most $n$ times, we obtain a trivial instance for \dom{} and can therefore correctly determine its answer.

\section{The proof of Proposition \ref{prop:boundeddom}}
\label{section:easy}

\begin{itemize}
\item[(a)] It suffices to note that if we can compute $\gamma(G)$ and $\gamma(G\setminus e)$, for any edge $e$ of $G$, in polynomial time, then we can determine whether a graph $G$ is a \yes-instance for \contracd{} in polynomial time (we may proceed in a similar fashion for \twocontracd{}).

\item[(b)] Given a graph $G$ of $\mathcal{C}$, we first check whether $G$ has a dominating vertex. If it is the case, then $G$ is a \no-instance for \kcontracd{} for both $k=1,2$. Otherwise, we may consider any $S \subset V(G)$ with $\vert S \vert \leq q$ and check whether it is a dominating set of $G$. Since there are at most $\Oh(n^q)$ possible such subsets, we can determine the domination number of $G$ and check whether the conditions given in Theorem~\ref{theorem:contracdom} (i) or (ii) are satisfied in polynomial time. 

\item[(c)] The algorithm works similarly for $k=1$ and $k=2$. Let $H$ and $q$ be as stated and let $G$ be an instance of \kcontracd{} on $(H+K_1)$-free graphs. We first test whether $G$ is $H$-free (note that this can be done in time $\Oh(n^q)$). If this is the case, we use the polynomial-time algorithm for \kcontracd{} on $H$-free graphs. Otherwise, $G$ has an induced subgraph isomorphic to $H$; but since $G$ is a $(H+K_1)$-free graph, $V(H)$ must then be a dominating set of $G$ and so, $\gamma(G)\leq q$. We then conclude by Proposition~\ref{prop:boundeddom}(b) that \kcontracd{} is also polynomial-time solvable in this case.
\end{itemize}

\section{The proof of Observation \ref{obs:mimwcontrac}}
\label{section:mimw}
Indeed, note that the graph $G\setminus e$ can be obtained from $G$ by the removal of the vertices $u$ and $v$ where $e=uv$, and the addition of a new vertex whose neighborhood is $N(u)\cup N(v)$. The result then follows from Observation~\ref{obs:mimaddv} and the fact that vertex deletion does not increase the mim-width of a graph.

\end{document}